\documentclass[submission,copyright,creativecommons]{eptcs}
 % Name of the event you are submitting to
\usepackage{breakurl}             % Not needed if you use pdflatex only.

\usepackage{amsmath}
\usepackage{amssymb}
\usepackage{amsthm}
\usepackage{tikz,wrapfig}
\usepackage{amscd}
%\usetikzlibrary{arrows,chains,matrix,positioning,scopes}

%
%\makeatletter
%\tikzset{join/.code=\tikzset{after node path={%
%\ifx\tikzchainprevious\pgfutil@empty\else(\tikzchainprevious)%
%edge[every join]#1(\tikzchaincurrent)\fi}}}
%\makeatother
%
%\tikzset{>=stealth',every on chain/.append style={join},
 %        every join/.style={->}}
%\tikzstyle{labeled}=[execute at begin node=$\scriptstyle,
  % execute at end node=$]

\newtheorem{definition}{Definition}
\newtheorem{proposition}[definition]{Proposition}
\newtheorem{lemma}[definition]{Lemma}
\newtheorem{theorem}[definition]{Theorem}
\newtheorem{corollary}[definition]{Corollary}

\newtheorem{example}[definition]{Example}
\newtheorem{observation}[definition]{Observation}

\newcommand{\name}[1]{\textsc{#1}}
\newcommand{\hide}[1]{}

\title{A Semi-Potential for Finite and Infinite Sequential Games\\(Extended Abstract)}

\author{St\'{e}phane Le Roux
\institute{D\'epartement d'informatique\\ Universit\'e libre de
Bruxelles, Belgique}
\email{Stephane.Le.Roux@ulb.ac.be}
\and
Arno Pauly
\institute{D\'epartement d'informatique\\ Universit\'e libre de
Bruxelles, Belgique}
\email{\quad Arno.Pauly@cl.cam.ac.uk}
}

\begin{document}

 \maketitle

\begin{abstract}
We consider a dynamical approach to sequential games. By restricting the convertibility relation over strategy profiles, we obtain a semi-potential (in the sense of Kukushkin), and we show that in finite games the corresponding restriction of better-response dynamics will converge to a Nash equilibrium in quadratic time. Convergence happens on a per-player basis, and even in the presence of players with cyclic preferences, the players with acyclic preferences will stabilize. Thus, we obtain a candidate notion for rationality in the presence of irrational agents. Moreover, the restriction of convertibility can be justified by a conservative updating of beliefs about the other players strategies.

For infinite sequential games we can retain convergence to a Nash equilibrium (in some sense), if the preferences are given by continuous payoff functions; or obtain a transfinite convergence if the outcome sets of the game are $\Delta^0_2$-sets.
\end{abstract}

\section{Introduction}
Nash equilibria are the terminal strategy profiles of the better (or best) response dynamics. In general, though, these dynamics do not terminate (and do not even converge). A particular exception is found in the potential games \cite{shapley}: A potential is an acyclic joint order extension of the individual players improvement relations. In a potential game, better-response dynamics thus will always improve the potential, and hence terminates (at a Nash equilibrium) if the game is finite.

The notion of semi-potential was introduced by \name{Kukushkin} (\cite{kukushkin2}, also \cite{kukushkin3}) in order to salvage some of the nice properties of potential games for a larger class of games. Here, the players freedom to change their strategies is restricted -- however, only in such a way that if they can change the current outcome to a particular one, they can also do so in a way that is consistent with the restriction. In a \emph{generic} normal form game this is equivalent to a potential, as there different strategies will induce different outcomes. Nevertheless, several classes of non-generic games have no potential but have a semi-potential: \cite[Theorem 3]{kukushkin2} proved that it is the case for sequential games.

We study the restriction of the convertibility relation (we call it \emph{lazy convertibility}) as well as the resulting better-response dynamics (\emph{lazy improvement}) in some more detail. We give two alternative proofs of the termination at a Nash equilibrium in finite games, one of which yields a tight quadratic bound on the number of steps required. Moreover, our two proofs of termination works on a per player basis: Thus, any player with acyclic preferences will converge, even in the presence of players with cyclic preferences. Then we explore several possible extensions to infinite sequential games, and one to directed acyclic graphs. A very specific infinite setting was explored in \cite{boros}, and lazy improvement in infinite sequential games with continuous payoff functions was investigated by the authors in \cite{paulyleroux2}.

Some relevant properties of lazy improvement are:
\begin{itemize}
\item The dynamics are uncoupled: Each player bases her decisions only on her own preference and the strategies of the other players, but does not need to know the other players' preferences.
\item The dynamics are history-independent: Unlike e.g.~fictitious play or typical regret-minimization approaches (e.g.~\cite{hart}), the next step in the dynamics depends only on the current strategies of the players. In particular, players do not need additional memory for \emph{learning}.
\item We consider pure strategies, not stochastic ones. Thus, our approach has a very different flavour from the usual evolutionary game theory one (e.g.~\cite{cressman,cressman2,ziboxu}).
\item No restrictions akin to \emph{generic payoffs} are required, we merely need acyclic preferences to guarantee termination at a Nash equilibrium in finite games (and anyway this requirement cannot be avoided for existence of Nash equilibrium \cite{SLR-PhD08,SLR09}).
\item In a finite game, the dynamics stabilizes at a Nash equilibrium after a quadratic number of steps.
\item The stabilization result for the rational players, \textit{i.e.} with acyclic preferences, remains unaffected, if unpredictable players, \textit{i.e.} with cyclic preferences, are added.
\item Under some conditions, even in an infinite sequential game we can ensure stabilization at a Nash equilibrium after a transfinite number of steps.
\end{itemize}

The rest of the paper is organized as follows: Section \ref{sec:definition} introduces the core concept of \emph{lazy improvement}. Section \ref{sec:finite} proves that in a finite game, lazy improvement terminates at a Nash equilibrium. Section~\ref{sect:finite-2nd-proof} gives an alternative proof also showing that termination occurs after a quadratic number of improvement steps. Section~\ref{sec:beliefs} gives a basic epistemic justification for lazy convertibility. In Section \ref{sec:infinite} we discuss extensions to infinite games. Finally, Section \ref{sec:counter} provides a number of (counter)examples showing that, to some extent, our definitions have to be the way they are. An extended preprint is available as \cite{leroux2}.

\section{Background and Notation}
We start by introducing notation for both games in normal form and sequential games.

\begin{definition}[Games in normal form]\label{defn:gnf}
A game in normal form is a tuple $\langle A,(S_a)_{a\in A},O,v,(\prec_a)_{a\in A}\rangle$ satisfying the following:
\begin{itemize}
\item $A$ is a non-empty set (of players, or agents),
\item $\prod_{a\in A}S_a$ is a non-empty Cartesian product (whose elements are the \emph{strategy profiles} and where $S_a$ represents the strategies available to player $a$),
\item $O$ is a non-empty set (of possible outcomes),
\item $v:\prod_{a\in A} S_a\to O$ (the outcome function that values the strategy profiles),
\item Each $\prec_a$ is a binary relation over $O$ (modelling the preference of player $a$).
\end{itemize}
\end{definition}

\begin{definition}[Nash equilibrium]\label{defn:ne}
Let $\langle A,(S_a)_{a\in A} ,O,v,(\prec_a)_{a\in A}\rangle$ be a game in normal form. A strategy profile (profile for short) $s$ in $S:=\prod_{a\in A} S_a$ is a Nash equilibrium if it makes every player $a$ stable, \textit{i.e.} $v(s)\not\prec_a v(s')$ for all $s'\in S$ that differ from $s$ at most at the $a$-component.
\[NE(s)\quad:=\quad\forall a\in A,\forall s'\in S,\quad\neg(v(s)\prec_a v(s')\,\wedge\,\forall b\in A-\{a\},\,s_b= s'_b)\]
\end{definition}

Implicit in the concept of Nash equilibrium is the notion of \emph{convertibility}: An agent can convert one strategy profile to another, if they differ only in her actions. As lazy improvement will be introduced in Section~\ref{sec:definition} by restricting the convertibility relation, we provide a formal definition:

\begin{definition}[Convertibility, induced preference over profiles, and improvement]\label{defn:asyn-improv}\hfill
\begin{itemize}
\item Let $\langle A,(S_a)_{a\in A},O,v,(\prec_a)_{a\in A}\rangle$ be a game in normal form. For $s, s' \in \prod_{a\in A}S_a$, let $s\stackrel{c}{\twoheadrightarrow}_as'$ denote the ability of player $a$ to convert $s$ to $s'$ by changing her own strategy, formally $s\stackrel{c}{\twoheadrightarrow}_as':=\forall b\in A-\{a\},\,s_b=s'_b$.

\item Given a game $\langle A,(S_a)_{a\in A},O,v,(\prec_a)_{a\in A}\rangle$, let $s\prec_a s'$ denote $v(s)\prec_a v(s')$. So in this article $\prec_a$ may also refer to the induced preference over the profiles.

\item Let $\twoheadrightarrow_a\,:=\,\prec_a\cap\stackrel{c}{\twoheadrightarrow}_a$ be the individual improvement relations of the players and let $\twoheadrightarrow\,:=\,\cup_{a\in A}\twoheadrightarrow_a$ be the collective\footnote{The word \emph{collective} should not be misread to indicate any form of coordination between the players.} improvement relation.
\end{itemize}
\end{definition}

Observation~\ref{obs:ne-sink} below is a direct consequence of Definitions~\ref{defn:ne} and \ref{defn:asyn-improv}.

\begin{observation}\label{obs:ne-sink}
The Nash equilibria of a game are exactly the sinks, \textit{i.e.}, the terminal profiles of the collective improvement $\twoheadrightarrow$.
\end{observation}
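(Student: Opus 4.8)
The plan is to observe that both ``being a Nash equilibrium'' and ``being a sink of $\twoheadrightarrow$'' are, once the relevant definitions are unfolded, expressed by literally the same first-order formula; so the proof amounts to chasing Definitions~\ref{defn:ne} and \ref{defn:asyn-improv} and pushing a negation through a pair of existential quantifiers.

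Concretely, I would start from the ``sink'' side. By definition a profile $s\in S$ is a terminal profile of $\twoheadrightarrow$ precisely when there is no $s'\in S$ with $s\twoheadrightarrow s'$. Using $\twoheadrightarrow=\bigcup_{a\in A}\twoheadrightarrow_a$ and $\twoheadrightarrow_a={\prec_a}\cap{\stackrel{c}{\twoheadrightarrow}_a}$ from Definition~\ref{defn:asyn-improv}, and recalling that there $s\prec_a s'$ abbreviates $v(s)\prec_a v(s')$ and $s\stackrel{c}{\twoheadrightarrow}_a s'$ abbreviates $\forall b\in A-\{a\},\,s_b=s'_b$, this unfolds to
\[
\neg\,\exists a\in A,\ \exists s'\in S,\quad v(s)\prec_a v(s')\ \wedge\ \forall b\in A-\{a\},\ s_b=s'_b .
\]
Distributing the negation over the two existential quantifiers gives exactly
\[
\forall a\in A,\ \forall s'\in S,\quad \neg\bigl(v(s)\prec_a v(s')\ \wedge\ \forall b\in A-\{a\},\ s_b=s'_b\bigr),
\]
which is verbatim the predicate $NE(s)$ of Definition~\ref{defn:ne}. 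Hence $s$ is a sink of $\twoheadrightarrow$ if and only if $NE(s)$ holds, which is the claim.

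The only point deserving a word of care is the identification of ``$s'$ ranging over all of $S$'' with ``$s'$ differs from $s$ at most at the $a$-component'': the latter restriction is already carried by the conjunct $\forall b\in A-\{a\},\,s_b=s'_b$ (equivalently, by the relation $\stackrel{c}{\twoheadrightarrow}_a$), so letting $s'$ range over all of $S$ loses nothing. There is no genuine obstacle here — as announced before the statement, the observation is an immediate consequence of the two preceding definitions, and the proof is just this unfolding.
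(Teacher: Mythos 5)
Your unfolding is correct and is exactly what the paper intends: it offers no proof beyond remarking that the observation "is a direct consequence of Definitions~\ref{defn:ne} and \ref{defn:asyn-improv}", and your definition-chasing makes that explicit. Nothing is missing.
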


A (generalized) \emph{potential} is an acyclic relation containing $\twoheadrightarrow$. Clearly a game has a potential iff  $\twoheadrightarrow$  itself is acyclic. If $\prod_{a\in A}S_a$ is finite, this is equivalent to the better-response dynamics' always converging. A less restrictive notion is a semi-potential (introduced in \cite{kukushkin2}). A semi-potential is an acyclic relation $\hookrightarrow$ contained in $\twoheadrightarrow$, such that whenever $s \twoheadrightarrow s'$ then there is some $s''$ with $s \hookrightarrow s''$ and $v(s) = v(s'')$. In words, if a strategy profile can be reached by an improvement step, then there is an equivalent strategy profile (w.r.t~outcome) reachable via a step in the semi-potential. It follows that the sinks of a semi-potential are exactly the sinks of the collective improvement. Thus, in a finite setting, the existence of a semi-potential in particular implies the existence of sinks, i.e.~Nash equilibria.

Our setting will be sequential games, rather than games in normal form. The idea here is that the players collectively choose a path through a tree, with each player deciding the direction at the vertices that she is controlling. The preferences refer only to the path created, choices off the chosen path are irrelevant. Thus, the evaluation map $v$ is highly non-injective\footnote{}, which in turn gives room for the notion of a semi-potential to be interesting. Formally, we define sequential games as follows:

\begin{definition}[Sequential games]
A sequential game is a tuple $(A, T, O, d, v, (\prec_a)_{a \in A})$ where
\begin{itemize}
\item $A$ is the non-empty set of players,
\item $T$ is a rooted tree (finite or infinite),
\item $O$ is the non-empty set of outcomes,
\item $d$ associates a player with each vertex in the tree,
\item $v$ associates an outcome with each maximal path from the root through the tree,
\item and for each player $a \in A$, $\prec_a$ is a relation on $O$ (the preference relation of $a$).
\end{itemize}
\end{definition}

The corresponding game in normal form is obtained as follows: Let a strategy of player $a$ associate an outgoing edge with each vertex controlled by $a$. If a strategy per player is given, the collective choices identify some maximal path $p$ through the tree, called the \emph{induced play}. Applying $v$ to that path yields the outcome of the game; i.e.~the valuation of the game in normal form is the composition of the map that identifies the induced play and the valuation of the sequential game.

In our concrete examples, the outcomes will be tuples of natural numbers, and the $n$-th player will prefer a tuple $(x_1,\ldots,x_{|A|})$ to $(y_1,\ldots,y_{|A|})$ iff $x_n > y_n$.

\section{Defining \emph{lazy improvement}}
\label{sec:definition}
The idea underlying lazy improvement is that we do not let a player change their irrelevant choices, i.e.~those choices not along the play induced after the improvement. Equivalently, we require a player to change as few choices as possible when changing the induced play.

% By contraposition, a player can only change his decisions in the situations that actually occur, i.e.~those along the chosen path.

\begin{definition}[Lazy convertibility and improvement]\label{defn:lazy-conv}\hfill
\begin{itemize}
\item For two strategy profiles $s$, $s'$ in a sequential game let $s\stackrel{c}{\rightharpoonup}_a s'$ (read: \emph{$a$ can lazily convert $s$ into $s'$}), if for any vertex $t \in T$, if $s(t) \neq s'(t)$, then $d(t) = a$ and $t$ lies along the play induced by $s'$.
\item Let $\rightharpoonup_a\,:=\,\prec_a\cap\stackrel{c}{\rightharpoonup}_a$ be the lazy improvement of player $a$ and let $\rightharpoonup\,:=\,\cup_{a\in A}\rightharpoonup_a$ be the (collective) lazy improvement.
\end{itemize}
\end{definition}

Let us exemplify the notion of lazy convertibilty, which has nothing to do with the preferences or the outcomes: player $a$ can lazily convert the leftmost strategy profile below into each of the profiles below, but not into any other profile. Player $a$ is written bold face at nodes where changes occur, and double lines represent strategy choices.

\begin{tabular}{cccc}
\begin{tikzpicture}[level distance=7mm]
\node{a}[sibling distance=16mm]
	child{node{a}[sibling distance=8mm] edge from parent[double]
		child{node{}edge from parent[double]}
		child{node{}}
	}
	child{node{a}[sibling distance=8mm]
			child{node{}edge from parent[double]}
			child{node{}}
	};
\end{tikzpicture}
&
\begin{tikzpicture}[level distance=7mm]
\node{a}[sibling distance=16mm]
	child{node{\bf{a}}[sibling distance=8mm] edge from parent[double]
		child{node{}}
		child{node{}edge from parent[double]}
	}
	child{node{a}[sibling distance=8mm]
			child{node{}edge from parent[double]}
			child{node{}}
	};
\end{tikzpicture}
&
\begin{tikzpicture}[level distance=7mm]
\node{\bf{a}}[sibling distance=16mm]
	child{node{a}[sibling distance=8mm]
		child{node{}edge from parent[double]}
		child{node{}}
	}
	child{node{a}[sibling distance=8mm]edge from parent[double]
			child{node{}edge from parent[double]}
			child{node{}}
	};
\end{tikzpicture}
&
\begin{tikzpicture}[level distance=7mm]
\node{\bf{a}}[sibling distance=16mm]
	child{node{a}[sibling distance=8mm]
		child{node{}edge from parent[double]}
		child{node{}}
	}
	child{node{\bf{a}}[sibling distance=8mm]edge from parent[double]
			child{node{}}
			child{node{}edge from parent[double]}
	};
\end{tikzpicture}
\end{tabular}

\noindent Contrary to the convertibility relations $\stackrel{c}{\twoheadrightarrow}_a$ which are equivalence relations, the lazy convertibility relations $\stackrel{c}{\rightharpoonup}_a$ are certainly reflexive but in general neither symmetric nor transitive. For instance, player $a$ cannot lazily convert the rightmost profile above back into the leftmost one. In the additional example below, player $a$ can convert the leftmost profile to the middle profile but not to the rightmost profile.

\begin{tabular}{ccc}
\begin{tikzpicture}[level distance=7mm]
\node{a}[sibling distance=20mm]
	child{node{a}[sibling distance=10mm]edge from parent[double]
		child{node{a}[sibling distance=8mm]edge from parent[double]
			child{node{}edge from parent[double]}
			child{node{}}
		}
		child{node{a}[sibling distance=8mm]
			child{node{}edge from parent[double]}
			child{node{}}
		}
	}
	child{node{a}[sibling distance=10mm]
		child{node{a}[sibling distance=8mm]edge from parent[double]
			child{node{}edge from parent[double]}
			child{node{}}
		}
		child{node{a}[sibling distance=8mm]
			child{node{}edge from parent[double]}
			child{node{}}
		}
	};
\end{tikzpicture}
&
\begin{tikzpicture}[level distance=7mm]
\node{\bf{a}}[sibling distance=20mm]
	child{node{a}[sibling distance=10mm]
		child{node{a}[sibling distance=8mm]edge from parent[double]
			child{node{}edge from parent[double]}
			child{node{}}
		}
		child{node{a}[sibling distance=8mm]
			child{node{}edge from parent[double]}
			child{node{}}
		}
	}
	child{node{a}[sibling distance=10mm] edge from parent[double]
		child{node{\bf{a}}[sibling distance=8mm]edge from parent[double]
			child{node{}}
			child{node{}edge from parent[double]}
		}
		child{node{a}[sibling distance=8mm]
			child{node{}edge from parent[double]}
			child{node{}}
		}
	};
\end{tikzpicture}
&
\begin{tikzpicture}[level distance=7mm]
\node{\bf{a}}[sibling distance=20mm]
	child{node{a}[sibling distance=10mm]
		child{node{a}[sibling distance=8mm]edge from parent[double]
			child{node{}edge from parent[double]}
			child{node{}}
		}
		child{node{a}[sibling distance=8mm]
			child{node{}edge from parent[double]}
			child{node{}}
		}
	}
	child{node{a}[sibling distance=10mm] edge from parent[double]
		child{node{\bf{a}}[sibling distance=8mm]edge from parent[double]
			child{node{}}
			child{node{}edge from parent[double]}
		}
		child{node{\bf{a}}[sibling distance=8mm]
			child{node{}}
			child{node{}edge from parent[double]}
		}
	};
\end{tikzpicture}
\end{tabular}

The lazy convertibility enjoys a useful property nonetheless, that the usual convertibility does not: if a player avoids a play during a sequence of lazy conversion, only the very same player is later able to make the last step to induce the same play again, possibly induced by a different profile. This phenomenon is more formally stated by Lemma~\ref{lem:avoid-play}.

\begin{lemma}\label{lem:avoid-play}
If $s\stackrel{c}{\rightharpoonup}_as_0\stackrel{c}{\rightharpoonup}\dots\stackrel{c}{\rightharpoonup}s_n\stackrel{c}{\rightharpoonup}_bs'$ where $s$ and $s'$ induce the same play, and if this play is different from the plays that are induced by the $s_i$, then $a=b$.
\end{lemma}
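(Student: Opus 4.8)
The plan is to analyze what the step $s_n \stackrel{c}{\rightharpoonup}_b s'$ must do, given that $s'$ induces the play $p$ (the play shared with $s$), and $p$ differs from all the plays $p_i$ induced by the $s_i$. First I would set up notation: write $p$ for the play induced by $s$ and by $s'$, write $p_i$ for the play induced by $s_i$, and let $p_0$ in particular be the play induced by $s_0$. The key structural fact to extract from the definition of lazy convertibility is this: when player $c$ lazily converts $x$ to $y$, the only vertices where $x$ and $y$ disagree are vertices controlled by $c$ that lie on the play induced by $y$. In particular, off the new play $y$ induces, the two profiles agree; and at every vertex strictly before the first point where $x$'s play and $y$'s play diverge, $x$ and $y$ make the same choice.

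The heart of the argument is to locate the vertex $t$ on $p$ where the play $p$ first "departs" from the trajectory of the intermediate profiles, and to show that $b = d(t) = a$. Concretely: since $s$ induces $p$ but $s_0$ induces $p_0 \neq p$, these two plays share a common prefix and then diverge; let $t$ be the last vertex on that common prefix, so $t$ lies on $p$, and $d(t) = a$ (the only player who changed anything in the step $s \stackrel{c}{\rightharpoonup}_a s_0$ is $a$, and the divergence of the induced plays must be caused by a changed choice at a vertex on $p_0$ — wait, more carefully, I need $t$ on $p$, so I should track it from the $s'$ side instead). The cleaner route: consider the final step $s_n \stackrel{c}{\rightharpoonup}_b s'$. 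Since $s_n$ induces $p_n \neq p$ and $s'$ induces $p$, there is a last common vertex $t$ of $p_n$ and $p$; then $t$ lies on $p$, the choices of $s_n$ and $s'$ differ at $t$ (that's what makes the plays diverge right after $t$), hence by laziness $d(t) = b$ and $t$ lies on $p$. Now I want to run the same analysis with $s$ in place of $s_n$: I claim $t$ lies on $p$, $s$ and $s'$ agree on the whole prefix of $p$ up to and including $t$ except possibly — no, $s$ and $s'$ both induce $p$, so I need a different handle.

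So the real work is to show $s$ and $s_i$ already agree at $t$ for every $i$, forcing the "real" change at $t$ to happen at the last step. Here is the mechanism: let $t$ be the divergence vertex of $p$ from $p_n$ as above, with $d(t) = b$. Since $t \in p$ and $p$ is induced by $s$, and $t \in p$ is also a vertex before the divergence of $p$ from $p_n$... I would argue by a downward induction / minimal-counterexample on the chain, showing that $s_i(t) = s(t)$ for all $i \le n$. Suppose not, and let $j$ be least with $s_j(t) \neq s_{j-1}(t)$ (indices along the chain $s = s_{-1}, s_0, \dots, s_n$); then the step changing the choice at $t$ is a lazy conversion by $d(t) = b$, and by laziness $t$ must lie on the play $p_j$ induced by $s_j$. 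But $t \in p$ and $t \in p_j$ means $p$ and $p_j$ share the prefix up to $t$ and also make the same choice at $t$ if $s_j(t) = s'(t)$... This is where I expect the main obstacle: I must rule out that the choice at $t$ gets set to its final value $s'(t)$ by some intermediate player and then the play $p$ is nonetheless avoided because of a divergence even earlier than $t$. The resolution should be: take $t$ to be not the $p$-vs-$p_n$ divergence vertex but rather the $\stackrel{c}{\subseteq}$-minimal vertex on $p$ at which $s(t) \neq s'(t)$ is false for all... — rather, take $t$ minimal on $p$ such that some $s_i$ disagrees with $s$ at $t$ while $t$ is on $p$ and on the induced play of whoever made that change. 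Then since changes are lazy, each such change occurs at a vertex on the changer's new induced play, and one shows the induced plays of the $s_i$ would have to coincide with $p$ up to $t$, and the choice at $t$ cannot be $s(t)$'s value (else $p$ would be induced too early) nor can a change to it on an earlier vertex happen — whence the unique player eligible to restore the choice at $t$ to produce $p$ at the final step is $d(t)$, giving $b = d(t)$; and symmetrically $d(t) = a$ because the very first step away from $s$ must have changed the choice at the first point where $p_0$ leaves $p$, which (by the same minimality) is $t$. I would then conclude $a = d(t) = b$.
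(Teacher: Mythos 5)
Your opening observations are sound and match the easy part of the argument: if $t_n$ is the last common vertex of $p$ and the play $p_n$ induced by $s_n$, then $s_n$ and $s'$ differ at $t_n$, so laziness forces $d(t_n)=b$; symmetrically the first step forces $d(t_0)=a$ for the divergence vertex $t_0$ of $p_0$ from $p$. The whole content of the lemma is bridging these two facts across the intermediate steps, and that bridge is missing. The divergence vertex of $p_i$ from $p$ can move up and down along $p$ and be owned by different players at different moments, so no argument tracking a single vertex fixed in advance by one of your recipes can succeed. Concretely, your final candidate --- $t$ minimal on $p$ such that some $s_i$ disagrees with $s$ at $t$ --- fails on the following instance. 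Take three players $a$, $b=a$, $e$; a root $r$ owned by $e$ with children a leaf $\ell_1$ and a vertex $u$ on $p$; $u$ owned by $a$ with two leaf children $\ell_2$ (on $p$) and $\ell_3$. Run: $a$ deviates at $u$ to $\ell_3$; $e$ deviates at $r$ to $\ell_1$; $e$ returns at $r$ to $u$; $a$ returns at $u$ to $\ell_2$. Every step is lazy, all three intermediate plays differ from $p$, and the first and last movers are both $a$, as the lemma asserts. But your $t$ is $r$, owned by $e$: the choice at $t$ equals $s(t)$ in $s_0$ even though $s_0$ avoids $p$ (the deviation is lower down), and the final step does not touch $t$ at all. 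So both ``the choice at $t$ cannot be $s(t)$'s value'' and ``the unique player eligible to restore the choice at $t$ at the final step is $d(t)$'' are false for this $t$. You correctly flag this obstacle yourself (``the play $p$ is nonetheless avoided because of a divergence even earlier than $t$''), but the proposed resolution does not rule it out.

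For comparison, the paper closes the gap by induction on the game tree rather than by naming a vertex up front. One projects the entire conversion sequence onto the subgame that $p$ enters at the root. If the projected subprofiles never change, then both the first and the last step must flip the root choice itself (otherwise $s_0$, respectively $s_n$, would induce $p$), so $a$ and $b$ are both the root owner. Otherwise one truncates the projected sequence at its first change and checks, using laziness (any change inside that subtree forces the new induced play to enter it, which would reproduce $p$ too early), that the truncated sequence again satisfies the hypotheses of the lemma in the subgame with its first step still attributable to $a$ and its last to $b$; then one recurses. The distinguished vertex at which $a=b$ is thereby discovered level by level, with the sequence re-truncated at each level --- which is precisely the flexibility your static choices of $t$ lack. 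To repair your proof you would need some such induction (on the tree, or on the length of the common prefix of $p$ with the intermediate plays), not a sharper one-shot definition of $t$.
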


\begin{proof}
Let us prove the claim by induction on the underlying game. Since the play induced by $s_0$ is different from the play induced by $s$, these profiles are not just leaves, but proper trees instead. During the assumed $\stackrel{c}{\rightharpoonup}$ reduction of $s$, its subprofile that is chosen by the root owner in $s$ undergoes a $\stackrel{c}{\rightharpoonup}$ reduction too, say $t\stackrel{c}{\rightharpoonup}_at_0\stackrel{c}{\rightharpoonup}\dots\stackrel{c}{\rightharpoonup}t_n\stackrel{c}{\rightharpoonup}_bt'$, where $t$ and $t'$ induce the same play (and the root owner chooses $t'$ in $s'$). If all these subprofiles are equal, player $a$ must be the root owner (of $s$), since $s$ and $s_0$ induce different plays by assumption, and $b$ is also the root owner since $s_n$ and $s'$ induce different plays, so $a=b$. Now let $t_j$ be the first subprofile different from $t$, so $t_j$ induces a play different from $t$ and $t'$. For  all $k$ such that $j \leq k < n$, if $t_k$ and $t'$ induce different plays but $t_{k+1}$ and $t'$ induce the same play, then $s_{k+1}$ and $s'$ induce the same play by definition of $\stackrel{c}{\rightharpoonup}$, contradiction with the assumptions of the lemma, so all $t_j,\dots t_n$ induce plays different from that of $t'$. If $t_1\neq t$, then $a=b$ by the induction hypothesis, else $a$ must be the root owner and does not choose $t_1$ in $s_1$. The first time that $a$ chooses some $t_i$ again must be in $s_j$: indeed if it were before, $s$ and $s_i$ would induce the same play, and if it were after, $a$ could not change $t_{j-1}$ into $t_j$. Therefore $t_{j-1}\stackrel{c}{\rightharpoonup}_at_j\stackrel{c}{\rightharpoonup}\dots\stackrel{c}{\rightharpoonup}t_n\stackrel{c}{\rightharpoonup}_bt'$ and $a=b$ by the induction hypothesis.
\end{proof}

Despite the restrictive property from Lemma~\ref{lem:avoid-play}, the lazy convertibility is as effective as the usual convertibility, in the same sense as used in the definition of a semi-potential. (Thus, it will only remain to prove that lazy improvement is acyclic in order to establish lazy improvement as a semi-potential).

\begin{observation}
\label{obs:outcomes}
If $s \twoheadrightarrow s'$, then there is some strategy profile $s''$ such that $s \rightharpoonup s''$ and $v(s') = v(s'')$.
\begin{proof}
By definition, lazy convertibility does not restrict the choice of the new induced play, merely the ability to alter the strategy off the new induced play.
\end{proof}
\end{observation}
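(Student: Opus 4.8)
The plan is to obtain $s''$ from $s'$ by discarding exactly those changes that the improving player makes off the newly induced play, and then to check the two required properties by hand. Since $s\twoheadrightarrow s'$, fix a player $a$ with $s\stackrel{c}{\twoheadrightarrow}_a s'$ and $v(s)\prec_a v(s')$, let $p'$ be the play induced by $s'$, and define $s''$ by $s''(t):=s'(t)$ for every vertex $t$ on $p'$ and $s''(t):=s(t)$ for every vertex $t$ off $p'$. By construction $s$ and $s''$ can differ only at vertices $t$ with $s(t)\neq s'(t)$, all of which are owned by $a$ because $s\stackrel{c}{\twoheadrightarrow}_a s'$; and those at which they actually differ lie on $p'$.

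The crucial step is to verify that the play induced by $s''$ is exactly $p'$. I would argue by induction on depth that, starting from the root, the play of $s''$ never leaves $p'$: at a vertex $t$ on $p'$ owned by $a$ we have $s''(t)=s'(t)$, which is the edge of $p'$ at $t$, so the play stays on $p'$; at a vertex $t$ on $p'$ owned by another player we have $s(t)=s'(t)$ since $s$ and $s'$ agree off $a$'s vertices, hence $s''(t)=s'(t)$ and again the play stays on $p'$. Since $p'$ is a maximal path, the play induced by $s''$ coincides with it, so $v(s'')=v(p')=v(s')$, which is one of the two conclusions.

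It then remains to see that $s\rightharpoonup s''$. We already know that $s$ and $s''$ differ only at vertices owned by $a$, each of which lies on $p'$, i.e. on the play induced by $s''$; this is precisely the defining condition of $s\stackrel{c}{\rightharpoonup}_a s''$. Combining with $v(s)\prec_a v(s')=v(s'')$ we get $s\rightharpoonup_a s''$, hence $s\rightharpoonup s''$ with $v(s'')=v(s')$, as required. The only point needing any care is the induction showing that the play of $s''$ does not drift off $p'$, and there the whole content is that a single-player convertibility step leaves all other players' choices untouched, so the induced play has no room to diverge from $p'$ except at $a$'s own vertices, which have been pinned to $p'$ by the definition of $s''$.
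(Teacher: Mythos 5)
Your construction of $s''$ (copy $s'$ along the new induced play, copy $s$ everywhere else) is exactly the content of the paper's one-line proof, which simply asserts that lazy convertibility only forbids changes off the new induced play; your write-up makes the verification explicit and is correct. No gap and no genuinely different route --- it is the same argument spelled out in detail.
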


\begin{corollary}\label{corr:nash-lazy-term}
The Nash equilibria of a game are exactly the terminal profiles of the lazy improvement $\rightharpoonup$.
\end{corollary}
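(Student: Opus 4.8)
The plan is to read this off directly from Observation~\ref{obs:ne-sink} and Observation~\ref{obs:outcomes}, together with the elementary fact that lazy improvement is a subrelation of collective improvement. First I would record that $\stackrel{c}{\rightharpoonup}_a\ \subseteq\ \stackrel{c}{\twoheadrightarrow}_a$ for every player $a$: if $s\stackrel{c}{\rightharpoonup}_a s'$, then by Definition~\ref{defn:lazy-conv} every vertex $t$ with $s(t)\neq s'(t)$ satisfies $d(t)=a$, hence $s$ and $s'$ agree on all vertices controlled by any $b\neq a$, i.e.\ $s_b=s'_b$, which is exactly $s\stackrel{c}{\twoheadrightarrow}_a s'$. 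Intersecting with $\prec_a$ and taking the union over $a$ gives $\rightharpoonup\ \subseteq\ \twoheadrightarrow$. Consequently every terminal profile of $\twoheadrightarrow$ is terminal for $\rightharpoonup$; since by Observation~\ref{obs:ne-sink} the terminal profiles of $\twoheadrightarrow$ are precisely the Nash equilibria, every Nash equilibrium is $\rightharpoonup$-terminal.

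For the converse I would argue contrapositively. Suppose $s$ is not a Nash equilibrium. Then by Observation~\ref{obs:ne-sink} it is not a sink of $\twoheadrightarrow$, so $s\twoheadrightarrow s'$ for some $s'$; by Observation~\ref{obs:outcomes} there is then some $s''$ with $s\rightharpoonup s''$ (and moreover $v(s')=v(s'')$, although only the existence of the $\rightharpoonup$-step is needed here), so $s$ is not $\rightharpoonup$-terminal. Combining this with the previous paragraph yields the asserted equality of the two sets. I do not expect any real obstacle: all the substance has already been isolated in Observations~\ref{obs:ne-sink} and~\ref{obs:outcomes}, and the only additional ingredient is the trivial inclusion $\rightharpoonup\ \subseteq\ \twoheadrightarrow$, which is precisely what ensures that restricting convertibility to lazy convertibility does not manufacture any spurious new sinks.
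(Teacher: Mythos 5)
Your argument is correct and is exactly the intended one: the paper leaves this corollary without an explicit proof because it follows immediately from Observation~\ref{obs:ne-sink}, Observation~\ref{obs:outcomes}, and the evident inclusion $\rightharpoonup\ \subseteq\ \twoheadrightarrow$ (this is the same reasoning the paper sketches in general when noting that the sinks of a semi-potential coincide with the sinks of the collective improvement). Nothing is missing.
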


\section{Termination in finite games, first proof}
\label{sec:finite}

\begin{theorem}\label{thm:lazy-term}
Consider a sequential game played on a finite tree, and some sequence $(s_n)_{n \in \mathbb{N}}$ such that $s_n \rightharpoonup s_{n+1}$ for all $n \in \mathbb{N}$. Assume that for a player $a$ there are infinitely many $n$ with $s_n \rightharpoonup_a s_{n+1}$. Then $a$ has a cyclic preference.
\end{theorem}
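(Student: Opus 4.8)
The plan is to prove the contrapositive-flavoured statement directly: assuming player $a$ makes infinitely many lazy improvement steps along the given sequence $(s_n)_{n\in\mathbb{N}}$ in a finite tree, I will exhibit a cycle of $\prec_a$. First a reduction: if some step $s_n\rightharpoonup_a s_{n+1}$ leaves the induced play unchanged, then $v(s_n)=v(s_{n+1})$ (the outcome depends only on the play) while $v(s_n)\prec_a v(s_{n+1})$, i.e.\ $v(s_n)\prec_a v(s_n)$, a one-element cycle, and we are done. So from now on assume every step by $a$ changes the induced play.

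Since the tree is finite there are only finitely many maximal paths, hence finitely many plays and finitely many outcomes among them. Call a play $q$ \emph{$a$-recurrent} if the step $s_n\rightharpoonup_a s_{n+1}$ induces $q$ (i.e.\ $s_{n+1}$ induces $q$) for infinitely many $n$; by pigeonhole the set $P^{\ast}$ of $a$-recurrent plays is non-empty. The heart of the argument is the claim that for every $q\in P^{\ast}$ there is some $r\in P^{\ast}$ with $v(q)\prec_a v(r)$. Granting it, pick any $q_0\in P^{\ast}$ and iterate the claim to get $q_0,q_1,q_2,\dots\in P^{\ast}$ with $v(q_i)\prec_a v(q_{i+1})$ for all $i$; as $\{v(q):q\in P^{\ast}\}$ is finite, some value repeats, say $v(q_i)=v(q_j)$ with $i<j$, whence $v(q_i)\prec_a v(q_{i+1})\prec_a\cdots\prec_a v(q_j)=v(q_i)$ is a cycle of $\prec_a$, as desired.

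To prove the claim, fix $q\in P^{\ast}$ and let $l_1<l_2<\dots$ enumerate the (infinitely many) times at which $q$ is induced. Whenever a step by $a$ at time $n$ induces $q$, the profile $s_n$ does not induce $q$ (since $a$-steps change the play) while $s_{n+1}$ does, so $n+1=l_i$ for some $i$ and, moreover, $l_i-1=n$ is not an occurrence of $q$, so $l_i>l_{i-1}+1$; discarding the at most one such $n$ with $l_i=l_1$, this holds for infinitely many $i\ge 2$. For each such $i$ apply Lemma~\ref{lem:avoid-play} to $s_{l_{i-1}}\stackrel{c}{\rightharpoonup}s_{l_{i-1}+1}\stackrel{c}{\rightharpoonup}\cdots\stackrel{c}{\rightharpoonup}s_{l_i}$: the endpoints induce $q$, the intermediate profiles induce plays different from $q$ by the choice of consecutive occurrences, and the final step is by $a$; hence the first step $s_{l_{i-1}}\stackrel{c}{\rightharpoonup}s_{l_{i-1}+1}$ is also by $a$. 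As $s_{l_{i-1}}$ induces $q$ and this step changes the play, it witnesses $v(q)\prec_a v(r^{(i)})$, where $r^{(i)}$ is the play induced by $s_{l_{i-1}+1}$, and the $a$-step at time $l_{i-1}$ induces $r^{(i)}$. The times $l_{i-1}$ are pairwise distinct while the $r^{(i)}$ range over the finite set of plays, so a second use of pigeonhole yields a single play $r$ with $v(q)\prec_a v(r)$ such that infinitely many $a$-steps induce $r$, i.e.\ $r\in P^{\ast}$; this proves the claim.

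The step I expect to be the main obstacle is the careful application of Lemma~\ref{lem:avoid-play}: one must pin down that the interval between two \emph{consecutive} occurrences of $q$ has no occurrence of $q$ in its interior, so that the lemma's hypothesis on the plays of the intermediate profiles is met, and one must check that the final step of the chosen interval is genuinely the $a$-step that induced $q$ (so the lemma forces the first step to be by $a$, turning ``$a$ enters $q$'' into ``$a$ leaves $q$'' one occurrence earlier). Everything else — the two pigeonhole extractions and distilling a $\prec_a$-cycle from an infinite $\prec_a$-increasing chain over finitely many outcomes — is routine.
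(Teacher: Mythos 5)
Your proof is correct and follows essentially the same route as the paper's: both hinge on applying Lemma~\ref{lem:avoid-play} to the segment between two consecutive occurrences of a recurrent play, so that the identity of the player whose step re-enters that play is transferred to the player whose step left it, combined with pigeonhole over the finitely many plays (resp.\ profiles). The only difference is presentational: the paper argues by contradiction via a $\prec_a$-minimal recurrent profile, whereas you build an explicit $\prec_a$-increasing chain of recurrent plays and close it into a cycle by finiteness.
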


\begin{proof}
Towards a contradiction let us assume that $a$'s preference is acyclic. Among the profiles $s$ such that $s = s_n \rightharpoonup_a s_{n+1}$ for infinitely many $n$, let $s$ be minimal for $a$'s preference, and let $M$ be large enough such that every profile $s_n$ with $M < n$ occurs infinitely often in the sequence. Let $s = s_n$ for some $n > M$, and let $k > n$ be the least $k$ such that $s_n$ and $s_k$ induce the same play. Lemma~\ref{lem:avoid-play} implies that $s_{k-1} \rightharpoonup_a s_k$, so player $a$ prefers the outcome of $s_n$ over that of $s_{k-1}$, contradiction.
%, let $s_l$ occurs infinitely often in the sequence and let $k>0$ be such that $s_l=s_{l+k}$ and such that player $a$ performs a lazy improvement in the sequence between $s_l$ and $s_{l+k}$. Up to sequence truncating, let us assume that $l=0$ , that the first step is $s_0\rightharpoonup_a s_1$, and that the outcome induced by $s_0$ is $\prec_a$-minimal  among the outcomes upon which player $a$ lazy-improves during the cycle. Note that $s_0$ and $s_1$ induce different plays since $s_0\prec_a s_1$, and let $s_{i+1}$ (where $0<i$) be the first profile along the cycle that induces the same play as $s_0$. Lemma~\ref{lem:avoid-play} on the sequence $s_0\rightharpoonup_a s_1\rightharpoonup\dots\rightharpoonup s_{i}\rightharpoonup s_{i+1}$ shows that $s_i\rightharpoonup_a s_{i+1}$, so $s_i\prec_a s_{i+1}$, whence $s_i\prec_a s_0$, which contradicts $\prec_a$-minimality of $s_0$.
\end{proof}

Together with Corollary~\ref{corr:nash-lazy-term} the following corollary shows the equivalence between all preferences being acyclic and universal existence of NE.

\begin{corollary}\label{cor:all-acycl}
Consider outcomes $O$, players $A$, and their preferences $(\prec_a)_{a \in A}$: All $\prec_a$ are acyclic iff for all finite sequential games built from $O$, $A$ and $(\prec_a)_{a \in A}$ the collective lazy improvement terminates.
\end{corollary}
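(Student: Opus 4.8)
The plan is to prove Corollary~\ref{cor:all-acycl} by establishing both implications, using the results already in hand. The forward direction (all $\prec_a$ acyclic $\Rightarrow$ collective lazy improvement terminates) is essentially immediate from Theorem~\ref{thm:lazy-term}: in a finite sequential game the set of strategy profiles is finite, so an infinite lazy-improvement sequence $(s_n)$ would have to take infinitely many steps $s_n \rightharpoonup s_{n+1}$, and since there are only finitely many players, some player $a$ must act infinitely often, i.e.\ $s_n \rightharpoonup_a s_{n+1}$ for infinitely many $n$. Theorem~\ref{thm:lazy-term} then forces $\prec_a$ to be cyclic, contradicting the hypothesis. Hence no infinite sequence exists, so $\rightharpoonup$ terminates.

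For the converse I would argue the contrapositive: if some $\prec_{a_0}$ has a cycle, then I must exhibit a \emph{particular} finite sequential game built from $O$, $A$, $(\prec_a)_{a\in A}$ on which collective lazy improvement fails to terminate. Suppose $o_0 \prec_{a_0} o_1 \prec_{a_0} \cdots \prec_{a_0} o_{m-1} \prec_{a_0} o_0$ is a preference cycle for $a_0$. The natural construction is a one-player-relevant game: take a tree where $a_0$ owns a single decision vertex (the root, say) with $m$ children, the $i$-th child being a leaf (or a subtree all of whose plays) valued $o_i$. A strategy profile here is determined (up to the irrelevant choices of other players, which lazy improvement will simply never touch once they are off the induced play) by which child $a_0$ selects. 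Starting from the profile inducing $o_0$, player $a_0$ can lazily convert to the profile inducing $o_1$ — this is a genuine lazy conversion since she changes only the root, which lies on the new induced play — and since $o_0 \prec_{a_0} o_1$ this is a lazy improvement step; iterating around the cycle gives an infinite $\rightharpoonup$ sequence. One subtlety: the definition of sequential game requires $d$ to assign \emph{some} player to each vertex and the tree to be rooted; with a single internal vertex owned by $a_0$ and leaves there is nothing to check, and the other players never get a chance to move, so their preferences are irrelevant. If one wants every player to "appear," one can instead prepend a chain of trivial one-child decision vertices, one per player, which changes nothing about the dynamics.

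The main obstacle is really just getting the converse construction stated cleanly: one has to make sure that the exhibited infinite sequence consists of \emph{lazy} improvement steps, not merely improvement steps, and that the game is legitimately "built from $O$, $A$ and $(\prec_a)_{a\in A}$" in whatever sense the statement intends (i.e.\ same outcome set, same player set, same preference relations, only the tree and $d$ and $v$ vary). Checking laziness is where Definition~\ref{defn:lazy-conv} must be invoked carefully: when $a_0$ switches the root edge from the child valued $o_i$ to the child valued $o_{i+1 \bmod m}$, the only vertex where the strategy changes is the root, $d(\text{root}) = a_0$, and the root trivially lies on the play induced by the new profile, so the conversion is indeed lazy. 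Everything else is bookkeeping. I would present the two directions as short separate paragraphs, citing Theorem~\ref{thm:lazy-term} for one and giving the explicit cyclic-outcome game for the other, and note in passing that the forward direction also gives, via Corollary~\ref{corr:nash-lazy-term}, that acyclicity of all preferences implies existence of a Nash equilibrium in every finite sequential game over that data — which is the "universal existence of NE" remark preceding the corollary.
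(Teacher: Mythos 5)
Your proposal is correct and follows essentially the same route as the paper: the forward direction is the pigeonhole consequence of Theorem~\ref{thm:lazy-term} (finitely many players, so an infinite sequence makes some player improve infinitely often), and the converse is witnessed by exactly the same one-decision-node game whose leaves carry the outcomes of a preference cycle. The additional care you take in checking that the root switch is a genuinely lazy conversion is a fine (if routine) elaboration of what the paper leaves implicit.
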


\begin{proof}
The difficult implication of the equivalence is a corollary of Theorem~\ref{thm:lazy-term}. For the other implication, note that if $x_0\prec_a x_1\prec_a\dots\prec_a x_n\prec_a x_0$, then $\rightharpoonup_a$ does not terminate on the profile below.

\begin{tikzpicture}[level distance=7mm]
\node{a}[sibling distance=12mm]
	child{node{$x_0$}[sibling distance=10mm]edge from parent[double]}
	child{node{$x_1$}[sibling distance=10mm]}
	child{node{...}[sibling distance=10mm]}
	child{node{$x_n$}[sibling distance=10mm]}
	;
\end{tikzpicture}
\end{proof}

\begin{corollary}\label{cor:acyclic-semi-pot}
In a finite sequential game where every player has acyclic preferences, lazy improvement is a semi-potential.
\begin{proof}
Combine Corollary \ref{corr:nash-lazy-term} and Corollary \ref{cor:all-acycl}.
\end{proof}
\end{corollary}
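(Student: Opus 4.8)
The plan is to verify directly the three clauses in the definition of a semi-potential for the collective lazy improvement $\rightharpoonup$, in the given finite sequential game where every $\prec_a$ is acyclic. Two of the clauses are essentially bookkeeping. \emph{Containment} $\rightharpoonup\,\subseteq\,\twoheadrightarrow$: if $s\stackrel{c}{\rightharpoonup}_as'$ then $s$ and $s'$ differ only at vertices owned by $a$, so in particular $s_b=s'_b$ for every $b\neq a$, i.e.\ $s\stackrel{c}{\twoheadrightarrow}_as'$; intersecting with $\prec_a$ and taking the union over players yields $\rightharpoonup\,\subseteq\,\twoheadrightarrow$. \emph{The covering clause} --- whenever $s\twoheadrightarrow s'$ there is $s''$ with $s\rightharpoonup s''$ and $v(s')=v(s'')$ --- is exactly Observation~\ref{obs:outcomes}; this is also what underlies Corollary~\ref{corr:nash-lazy-term}, recording that $\rightharpoonup$ and $\twoheadrightarrow$ share the same sinks.

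The only clause with real content is \emph{acyclicity} of $\rightharpoonup$. Here I would invoke Corollary~\ref{cor:all-acycl}: since all $\prec_a$ are acyclic, collective lazy improvement terminates on this (and every) finite sequential game built from the data. Because the game is finite the strategy-profile set is finite, so a $\rightharpoonup$-cycle would immediately give rise to an infinite $\rightharpoonup$-sequence, contradicting termination; hence $\rightharpoonup$ is acyclic. (Alternatively one can appeal straight to Theorem~\ref{thm:lazy-term}, which says an infinite $\rightharpoonup$-run in a finite game forces some player with infinitely many improvement steps to have a cyclic preference.)

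Assembling the three clauses, $\rightharpoonup$ is a semi-potential, which is the claim. There is no genuine obstacle: all the work has already been done in Observation~\ref{obs:outcomes}, Theorem~\ref{thm:lazy-term} and Corollary~\ref{cor:all-acycl}. The single point worth stating explicitly is the equivalence ``terminates $\Leftrightarrow$ acyclic'', which holds precisely because finiteness of the tree makes the set of strategy profiles finite.
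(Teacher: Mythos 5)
Your proof is correct and follows essentially the same route as the paper, which simply combines Corollary~\ref{corr:nash-lazy-term} and Corollary~\ref{cor:all-acycl}; you merely spell out the three clauses of the definition explicitly (containment is immediate, the covering clause is Observation~\ref{obs:outcomes}, and acyclicity follows from termination via finiteness of the profile set, exactly as the remark before Observation~\ref{obs:outcomes} anticipates).
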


\name{Kukushkin} (\cite[Theorem 3]{kukushkin2}) proved Corollary~\ref{cor:acyclic-semi-pot} in the case where the preferences are derived from payoffs. In this specific (yet usual) setting, it is not possible to consider players with cyclic preferences, so Theorem~\ref{thm:lazy-term} or Corollary~\ref{cor:all-acycl} cannot even be stated.

Based on Corollary \ref{cor:all-acycl} we obtain a reasonable candidate for rational behaviour in sequential games played with an unpredictable nature or erratic players: Perform lazy improvement until the players with acyclic preferences no longer change their strategies. It is always consistent with the observations to assume that the changes in another player's strategy are based on lazy convertibility. This argument is explored in more detail in Section \ref{sec:beliefs}. Nature can then be modelled as a player with the full relation as preferences, such that any convertible step for nature becomes an improvement step.

\section{Termination in finite games, second proof}\label{sect:finite-2nd-proof}

The proof of Theorem~\ref{thm:lazy-term}, by contradiction, gives a quick argument but no deep insight on how and how fast the relation terminates. A stronger statement can proven by using the multiset of outcomes avoided by a player $a$ (i.e.~the outcomes obtained in a subgame, where the decision not to play into that subgame was made by $a$, see Definition~\ref{defn:sgdo}) to construct a measure that will decrease on any lazy improvement step by $a$ (Lemma~\ref{lem:lazy-diff}), and remain unchanged by any lazy convertibility step by a different player (Lemma~\ref{lem:lazy-same}). Thus, we are in a situation very similar to potential games \cite{shapley} -- however, in a potential game a player \emph{can} increase the potential (which is common to all the players) but does not \emph{want} to, whereas here the players \emph{cannot} impact the measure of another player as long as they are restricted to lazy convertibility.

\begin{definition}[Avoided outcomes of a game and of a profile]\label{defn:sgdo}
The avoided outcomes of a game $g$ is a function $\Delta(g)$ of type $A\to\mathbb{N}$, and it is defined inductively below.
\begin{itemize}
\item $\Delta(g,a):=0$ if $g$ is a leaf game.
\item If player $a$ owns the root of a game $g$ whose children are $g_0,\dots,g_n$ then
\begin{itemize}
\item $\Delta(g,b):=\sum_{j=0}^n\Delta(g_j,b)$ for all $b\neq a$.
\item $\Delta(g,a):=\big(\sum_{j=0}^n\Delta(g_j,a)\big)+n$
\end{itemize}
\end{itemize}

The avoided outcomes of a profile $s$ is a function $\delta(s)$ of type $A\to O\to\mathbb{N}$, or equivalently in this case, of type $A\times O\to\mathbb{N}$, and it is defined inductively below.
\begin{itemize}
\item $\delta(s,a,o):=0$ if $s$ is a leaf profile.
\item If player $a$ owns the root of a profile $s$ and chooses the subprofile $s_i$ among $s_0,\dots,s_n$ then
\begin{itemize}
\item $\delta(s,b,o):=\sum_{j=0}^n\delta(s_j,b,o)$ for all $b\neq a$.
\item $\delta(s,a,o):=\big(\sum_{j=0}^n\delta(s_j,a,o)\big)+|\{j\in\{0,\dots,n\}-\{i\}\,\mid\,v(s_j)=o\}|$
\end{itemize}
\end{itemize}
\end{definition}

The smaller array below describes the function $\Delta(g)$, where $g$ is the underlying game of the left-hand profile $s$ below, and the right-hand array describes the function $\delta(s)$. For instance $\delta(s,b,y)=2$ because player $b$ avoids the outcome $y$ twice: once at the left-most internal node, after two leftward moves, when choosing outcome $x$ rather than $y$, and also once after one rightward move, also when choosing $x$ rather than $y$. Note that the only leaf that is not accounted for by the function of the avoided outcome of a profile/game is the leaf that is induced by the profile.

\begin{displaymath}
\begin{array}{c@{\hspace{2cm}}c@{\hspace{1cm}}c}
\begin{tikzpicture}[level distance=7mm]
\node{a}[sibling distance=36mm]
	child{node{b}[sibling distance=18mm] edge from parent[double]
		child{node{b}[sibling distance=8mm]edge from parent[double]
			child{node{$x$}edge from parent[double]}
			child{node{$y$}}
		}
		child{node{a}[sibling distance=8mm]
			child{node{$z$}edge from parent[double]}
			child{node{$t$}}
		}
	}
	child{node{b}[sibling distance=18mm]
		child{node{a}[sibling distance=8mm]edge from parent[double]
			child{node{$x$}edge from parent[double]}
			child{node{$t$}}
			child{node{$t$}}
		}
		child{node{a}[sibling distance=8mm]
			child{node{$y$}edge from parent[double]}
			child{node{$z$}}
		}
	}
	;
\end{tikzpicture}
&
\begin{array}{|c|}
	\cline{1-1}
	\Delta(g,\cdot)\\
	\cline{1-1}
	a \mapsto 5\\
	\cline{1-1}
	b \mapsto 3\\
	\cline{1-1}
\end{array}
&
\begin{array}{|c@{\;\vline\;}c@{\;\vline\;}c@{\;\vline\;}c@{\;\vline\;}c|}
	\cline{1-5}
	\delta(s,\cdot,\cdot) & x & y & z & t\\
	\cline{1-5}
	a & 1 & 0 & 1 & 3\\
	\cline{1-5}
	b & 0 & 2 & 1 & 0\\
	\cline{1-5}
\end{array}
\end{array}
\end{displaymath}

Observation~\ref{obs:sgdo} below relates the two functions from Definition~\ref{defn:sgdo}. It refers to $s2g$, a function that returns the underlying game of a given profile, see \cite{SLR-PhD08} or \cite{SLR09} for a proper definition.

\begin{observation}\label{obs:sgdo}
\begin{enumerate}
\item\label{obs:sgdo1} Let $s$ be a profile and $a$ be a player, then $\Delta(s2g(s),a)=\sum_{o\in O}\delta(s,a,o)$.
\item\label{obs:sgdo2} Let $g$ be a game, then $1+\sum_{a\in A}\Delta(g,a)$ equals the number of leaves of $g$.
\end{enumerate}
\end{observation}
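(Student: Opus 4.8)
The plan is to prove both parts by structural induction on the finite tree underlying the game/profile.

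For part~\ref{obs:sgdo1}, I would induct on the profile $s$. If $s$ is a leaf profile, both sides are $0$ by definition. Otherwise let player $c$ own the root, let $s_0,\dots,s_n$ be the subprofiles, and let $s_i$ be the one chosen by $c$. If $a\neq c$, then $\Delta(s2g(s),a)=\sum_{j=0}^n\Delta(s2g(s_j),a)$ and $\delta(s,a,o)=\sum_{j=0}^n\delta(s_j,a,o)$ for every $o$; summing the latter over $o\in O$ and interchanging the two finite sums reduces the claim to the induction hypothesis applied to each $s_j$. If $a=c$, the same interchange turns $\sum_{o\in O}\delta(s,a,o)$ into $\big(\sum_{j=0}^n\sum_{o\in O}\delta(s_j,a,o)\big)+\sum_{o\in O}|\{j\neq i\mid v(s_j)=o\}|$. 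The first summand equals $\sum_{j=0}^n\Delta(s2g(s_j),a)$ by the induction hypothesis. The key observation is that the second summand equals $n$: each of the $n$ subprofiles $s_j$ with $j\neq i$ induces a single outcome $v(s_j)$, hence is counted exactly once in the double count over $o$, and $\{0,\dots,n\}\setminus\{i\}$ has $n$ elements. Adding up, this matches the extra summand $+n$ in the definition of $\Delta(s2g(s),a)$, which closes the induction.

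For part~\ref{obs:sgdo2}, I would induct on the game $g$. If $g$ is a leaf, the left-hand side is $1$ and $g$ has exactly one leaf. Otherwise let $a$ own the root and $g_0,\dots,g_n$ be the children; the leaves of $g$ are partitioned among the subtrees, so the number of leaves of $g$ equals $\sum_{j=0}^n(\text{number of leaves of }g_j)$, which by the induction hypothesis is $\sum_{j=0}^n\big(1+\sum_{b\in A}\Delta(g_j,b)\big)=(n+1)+\sum_{b\in A}\sum_{j=0}^n\Delta(g_j,b)$. On the other hand, splitting off the root owner's contribution gives $\sum_{b\in A}\Delta(g,b)=\big(\sum_{j=0}^n\Delta(g_j,a)+n\big)+\sum_{b\neq a}\sum_{j=0}^n\Delta(g_j,b)=n+\sum_{b\in A}\sum_{j=0}^n\Delta(g_j,b)$, so $1+\sum_{b\in A}\Delta(g,b)=(n+1)+\sum_{b\in A}\sum_{j=0}^n\Delta(g_j,b)$, which is the leaf count just computed.

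I do not expect a genuine obstacle: both statements are bookkeeping about finite sums over a finite tree. The only points that need a moment's care are the identity $\sum_{o\in O}|\{j\neq i\mid v(s_j)=o\}|=n$ in part~\ref{obs:sgdo1} — that totalling the number of avoided copies of $o$ over all outcomes $o$ simply recovers the number of avoided children — and keeping the bound variables straight ($j$ over children, $o$ over outcomes, $b$ over players) when interchanging orders of summation; the role of $s2g$ is merely to let the $\delta$-side (which mentions the induced outcomes) and the $\Delta$-side (which mentions only the tree and its owner map) be compared on the same object.
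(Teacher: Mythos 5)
Your proof is correct and follows essentially the same route as the paper: structural induction on the tree, interchanging the finite sums over children and outcomes, the identity $\sum_{o\in O}|\{j\neq i\mid v(s_j)=o\}|=n$ for part 1, and partitioning the leaves among the subtrees while splitting off the root owner's $+n$ contribution for part 2. No gaps.
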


\begin{proof}
\begin{enumerate}
\item By induction on $s$. If $s$ is a leaf profile, the claim holds since $\Delta(s2g(s),a)=0=\delta(s,a,o)$ by definition, so now let $s$ be a profile where the root owner $a$ chooses $s_i$ among subprofiles $s_0,\dots,s_n$. For $b\neq a$ Definition~\ref{defn:sgdo} and the induction hypothesis yield $\Delta(s2g(s),b)=\sum_{j=0}^n\Delta(s2g(s_j),b)\stackrel{I.H.}{=}\sum_{j=0}^n\sum_{0\in O}\delta(s_j,b,o)=\sum_{0\in O}\sum_{j=0}^n\delta(s_j,b,o)=\sum_{0\in O}\delta(s,b,o)$. Similarly we have $\Delta(s2g(s),a) = \sum_{j=0}^n \Delta(s2g(s_j),a)+n\stackrel{I.H.}{=}\sum_{j=0}^n\sum_{o\in O}\delta(s_j,a,o)+|\{j\in\{0,\dots,n\}-\{i\}\,\mid\,v(s_j)\in O\}| =\\\sum_{o\in O} \big(\sum_{j=0}^n \delta(s_j,a,o)+|\{j\in\{0,\dots,n\}-\{i\}\,\mid\,v(s_j)=o\}|\big)=\sum_{o\in O}\delta(s,a,o)$.

\item By induction on $g$. This holds for every leaf game $g$ since $\Delta(g,a)=0$ by definition. Let $g$ be a game whose root is owned by player $a$ and whose subgames are $g_0,\dots,g_n$. The number of leaves in $g$ is the sum of the numbers of leaves in the $g_j$, that is, $\sum_{j=0}^{n}\big(1+\sum_{b\in A}\Delta(g_j,b)\big)$ by induction hypothesis. This, equals $1+\sum_{j=0}^{n}\sum_{b\in A-\{a\}}\Delta(g_j,b)+ n + \sum_{j=0}^{n}\Delta(g_j,a)$, which, in turn, equals $1+\sum_{b\in A-\{a\}}\Delta(g,b)+\Delta(g,a)$ by definition.
\end{enumerate}
\end{proof}

Lemma~\ref{lem:lazy-same} below states conservation of the outcomes that are avoided by a player in a profile during a lazy conversion of another player. Intuitively, it is because a lazy conversion of a player cannot modify the subtrees that are avoidd by the other players, even though she owns node therein.

\begin{lemma}\label{lem:lazy-same}
$s\stackrel{c}{\rightharpoonup}_as'\,\wedge\,b\neq a\quad\Rightarrow\quad\delta(s,b)=\delta(s',b)$
\end{lemma}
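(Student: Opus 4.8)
The plan is to prove Lemma~\ref{lem:lazy-same} by structural induction on the profile $s$ (equivalently, on the underlying finite tree). The statement to establish is that if $s\stackrel{c}{\rightharpoonup}_as'$ and $b\neq a$, then $\delta(s,b)=\delta(s',b)$ as functions $O\to\mathbb{N}$, i.e.\ $\delta(s,b,o)=\delta(s',b,o)$ for every outcome $o$.

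\begin{proof}
We argue by induction on $s$. If $s$ is a leaf profile, then so is $s'$ (lazy convertibility does not change the tree), and $\delta(s,b,o)=0=\delta(s',b,o)$ by Definition~\ref{defn:sgdo}, so the claim holds. Now suppose the root of $s$ is owned by a player $c$, with subprofiles $s_0,\dots,s_n$, and $c$ chooses $s_i$; since $s$ and $s'$ have the same underlying tree, $s'$ has the same root owner $c$ and subprofiles $s'_0,\dots,s'_n$, with $c$ choosing some $s'_{i'}$. We distinguish two cases according to whether $c=a$ or $c\neq a$.

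First suppose $c\neq a$. Then, by the definition of $\stackrel{c}{\rightharpoonup}_a$, player $a$ makes no change at the root, so the choice at the root is the same: $i=i'$. Moreover the restriction of $s\stackrel{c}{\rightharpoonup}_as'$ to each subtree is again a lazy conversion by $a$, i.e.\ $s_j\stackrel{c}{\rightharpoonup}_as'_j$ for every $j$ (the plays induced by $s'$ restricted to the relevant subtree are exactly the plays induced by the $s'_j$). If $c=b$, then $\delta(s,b,o)=\big(\sum_{j=0}^n\delta(s_j,b,o)\big)+|\{j\neq i\mid v(s_j)=o\}|$, and likewise for $s'$; the sums agree termwise by the induction hypothesis, and the cardinality terms agree because $i=i'$ and $v(s_j)=v(s'_j)$ (a lazy conversion by $a\neq b=c$ never alters the root choice inside $s_j$, hence cannot alter which leaf $s_j$ induces -- more carefully, one invokes the same observation used in Observation~\ref{obs:outcomes} that $\stackrel{c}{\rightharpoonup}_a$ only modifies choices along the newly induced play, so $v(s_j)=v(s'_j)$). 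If instead $c\notin\{a,b\}$, then $\delta(s,b,o)=\sum_{j=0}^n\delta(s_j,b,o)=\sum_{j=0}^n\delta(s'_j,b,o)=\delta(s',b,o)$ directly from the induction hypothesis.

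Now suppose $c=a$, so $a$ owns the root. Since $b\neq a$, we have $\delta(s,b,o)=\sum_{j=0}^n\delta(s_j,b,o)$ and $\delta(s',b,o)=\sum_{j=0}^n\delta(s'_j,b,o)$. The key point -- and the only place where the precise shape of lazy convertibility is used -- is that $a$ may change her choice at the root, say from $i$ to $i'$, but off the new induced play she changes nothing; consequently, for every $j$ that does \emph{not} lie on the new induced play, $s_j=s'_j$, while for $j=i'$ (the subtree entered by the new play) we have $s_{i'}\stackrel{c}{\rightharpoonup}_as'_{i'}$. In either case $s_j\stackrel{c}{\rightharpoonup}_as'_j$, so the induction hypothesis gives $\delta(s_j,b,o)=\delta(s'_j,b,o)$ for every $j$, and summing yields $\delta(s,b,o)=\delta(s',b,o)$. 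This completes the induction.
\end{proof}

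The main obstacle is purely bookkeeping: one must be careful that ``restricting a lazy conversion to a subtree is again a lazy conversion'' really holds, and that when the root owner $c$ is distinct from $a$ the root choice is genuinely unchanged so that the cardinality term $|\{j\neq i\mid v(s_j)=o\}|$ is preserved; both follow immediately from unwinding Definition~\ref{defn:lazy-conv}, but they are the steps worth stating explicitly. No creativity beyond the induction is needed, since $\delta$ was \emph{designed} to make this hold.
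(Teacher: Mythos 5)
Your proof is correct and follows essentially the same route as the paper's: structural induction on the profile, using that lazy convertibility restricts to the subprofiles and that subtrees off the newly induced play are untouched, with the paper merely organizing the case split around whether $b$ owns the root rather than whether $a$ does. (One small nit: in the case where $b$ owns the root, the equality $v(s_j)=v(s'_j)$ need only---and in general can only---be asserted for $j\neq i$, since $a$ may well change the play inside the chosen subprofile $s_i$; that is exactly what the cardinality term requires, so nothing breaks.)
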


\begin{proof}
By induction on the profile. It holds for leaves, so let $s\stackrel{c}{\rightharpoonup}_as'$ with subprofiles $s_0,\dots,s_n$ and $s'_0,\dots,s'_n$, respectively. By definition of $\stackrel{c}{\rightharpoonup}_a$ we have $s_j\stackrel{c}{\rightharpoonup}_as'_j$ for all $j$, and therefore $\delta(s_j,b,o)=\delta(s'_j,b,o)$ by induction hypothesis. If the root owner is different from $b$, then $\delta(s,b,o)=\sum_{j=0}^n\delta(s_j,b,o)=\sum_{j=0}^n\delta(s'_j,b,o)=\delta(s',b,o)$ by definition of $\delta$. If $b$ is the root owner, she chooses the $i$-th subprofile in both $s$ and $s'$ since $b\neq a$, and moreover $s'_j=s_j$ for all $j$ distinct from $i$. So $\delta(s,b,o)=\sum_{j=0}^n\delta(s_j,b,o)+|\{j\in\{0,\dots,n\}-\{i\}\,\mid\,v(s_j)=o\}|=\sum_{j=0}^n\delta(s'_j,b,o)+|\{j\in\{0,\dots,n\}-\{i\}\,\mid\,v(s'_j)=o\}|=\delta(s',b,o)$.
\end{proof}

However, the conservation does not fully hold for the player who converts the profile, unless the induced outcomes are the same for both profiles. The difference is little though, only depending on both induced outcomes. In Lemma~\ref{lem:lazy-diff} below, $ eq$ is just a boolean representation of equality: $ eq(x,x):=1$ and $ eq(x,y):=0$ for $x\neq y$.

\begin{lemma}\label{lem:lazy-diff}
$s\stackrel{c}{\rightharpoonup}_as'\quad\Rightarrow\quad\delta(s,a)+ eq(v(s))=\delta(s',a)+ eq(v(s'))$
\end{lemma}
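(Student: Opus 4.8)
The plan is to prove Lemma~\ref{lem:lazy-diff} by induction on the structure of the profile $s$, mirroring closely the proof of Lemma~\ref{lem:lazy-same}, with the extra bookkeeping coming from the $eq$-terms. The base case is when $s$ is a leaf profile: then $s'=s$ (lazy convertibility cannot change a leaf), $\delta(s,a)=\delta(s',a)=0$, and $v(s)=v(s')$, so both sides equal $eq(v(s))$ and the identity is trivially an equality of functions $O\to\mathbb{N}$ (or rather, read pointwise in $o$, with $eq(v(s))$ standing for the function $o\mapsto eq(v(s),o)$).

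For the inductive step, write $s$ with subprofiles $s_0,\dots,s_n$ and $s'$ with subprofiles $s'_0,\dots,s'_n$; by definition of $\stackrel{c}{\rightharpoonup}_a$ we get $s_j\stackrel{c}{\rightharpoonup}_a s'_j$ for every $j$, so the induction hypothesis gives $\delta(s_j,a)+eq(v(s_j))=\delta(s'_j,a)+eq(v(s'_j))$ for each $j$. There are two cases. If the root owner is some $b\neq a$, then $a$ cannot change the root choice, so $b$ picks the same index $i$ in $s$ and in $s'$; the induced outcome is $v(s)=v(s_i)$ and $v(s')=v(s'_i)$, and $\delta(s,a,o)=\sum_j\delta(s_j,a,o)$, likewise for $s'$. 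Summing the induction hypotheses over all $j$ and then subtracting the $j=i$ instance converts $\sum_j\delta(s_j,a,\cdot)+eq(v(s_i))$ into $\sum_j\delta(s'_j,a,\cdot)+eq(v(s'_i))$, which is exactly $\delta(s,a)+eq(v(s))=\delta(s',a)+eq(v(s'))$. If instead $a$ owns the root, let $i$ be the index $a$ chooses in $s$ and $i'$ the index $a$ chooses in $s'$; then $\delta(s,a,o)=\sum_j\delta(s_j,a,o)+|\{j\neq i: v(s_j)=o\}|$ and similarly for $s'$ with $i'$, while $v(s)=v(s_i)$ and $v(s')=v(s'_{i'})$. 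Here one rewrites $|\{j\neq i: v(s_j)=o\}|$ as $\big(\sum_j eq(v(s_j),o)\big)-eq(v(s_i),o)$ and likewise for $s'$, so that $\delta(s,a,o)+eq(v(s),o)=\sum_j\big(\delta(s_j,a,o)+eq(v(s_j),o)\big)$, and symmetrically $\delta(s',a,o)+eq(v(s'),o)=\sum_j\big(\delta(s'_j,a,o)+eq(v(s'_j),o)\big)$; the two right-hand sides agree term by term by the induction hypothesis.

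The one subtlety to watch — and the only place the argument is more than mechanical — is the case where $a$ owns the root but does \emph{not} change the root choice (i.e.\ $s'_j=s_j$ for all $j$ and the chosen index is the same): this is subsumed by the $a$-owns-the-root computation above with $i=i'$, but one should make sure the definition of $\stackrel{c}{\rightharpoonup}_a$ indeed forces $s'_j=s_j$ for $j$ off the newly induced play, so that the nontrivial reductions $s_j\stackrel{c}{\rightharpoonup}_a s'_j$ only happen along the branch actually taken, while every other subprofile is literally unchanged and contributes $0$ to the difference. I expect this is where a careful reader would want the most detail, but the only real content is the algebraic identity $|\{j\neq i:v(s_j)=o\}| + eq(v(s_i),o) = \sum_j eq(v(s_j),o)$, valid because $i$ ranges over $\{0,\dots,n\}$; everything else is pushing sums around and invoking the induction hypothesis, exactly as in Lemma~\ref{lem:lazy-same}.
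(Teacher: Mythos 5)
Your proof is correct and follows essentially the same route as the paper: induction on the profile, a case split on whether $a$ owns the root, the observation that $s_j=s'_j$ off the branch chosen by $s'$ (so $s_j\stackrel{c}{\rightharpoonup}_as'_j$ for all $j$), and the counting identity $|\{j\neq x: v(s_j)=o\}|+eq(v(s_x),o)=\sum_j eq(v(s_j),o)$. The only difference is cosmetic: you absorb the avoided-outcome count into $\sum_j eq(v(s_j),o)$ and sum the induction hypothesis over all subprofiles at once, whereas the paper rewrites the quantity twice (with $x:=i$ and $x:=k$) and invokes the hypothesis only on the single changed subprofile.
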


\begin{proof}
By induction on the profile $s$. It holds for leaves, so let $s\stackrel{c}{\rightharpoonup}_as'$ with subprofiles $s_0,\dots,s_n$ and $s'_0,\dots,s'_n$, respectively. If the root owner is distinct from $a$, she chooses the same $i$-th subprofile in both $s$ and $s'$, therefore $\delta(s,a,o)+ eq(v(s),o)=\sum_{0\leq j\leq n\,\wedge\,j\neq i}\delta(s_j,a,o)+\delta(s_i,a,o)+ eq(v(s_i),o)=\sum_{0\leq j\leq n\,\wedge\,j\neq i}\delta(s'_j,a,o) + \delta(s'_i,a,o) + eq(v(s'_i),o)=\delta(s',a,o)+ eq(v(s'),o)$ by definition of $\delta$, since $s_j=s'_j$ for $j\neq i$, and by induction hypothesis.

If $a$ is the root owner, let $a$ choose the $i$-th and $k$-th subprofiles in $s$ and $s'$, respectively. Let $N:=\delta(s,a,o)+ eq(v(s),o)$, so $N=\sum_{0\leq j\leq n\,\wedge\,j\neq k}^n\delta(s'_j,a,o)+|\{j\in\{0,\dots,n\}-\{i\}\,\mid\,v(s_j)=o\}|+\delta(s_k,a,o)+ eq(v(s_i),o)$ by unfolding Definition~\ref{defn:sgdo}, since $s'_j=s_j$ for all $j\neq k$, and since $v(s)=v(s_i)$ by the choice at the root. Rewriting $N$ twice with the easy-to-check equality $|\{j\in\{0,\dots,n\}-\{x\}\,\mid\,v(s_j)=o\}|+ eq(v(s_x),o)=|\{j\in\{0,\dots,n\}\,\mid\,v(s_j)=o\}|$, first with $x:=i$ and then with $x:=k$ yields the equality $N=\sum_{0\leq j\leq n\,\wedge\,j\neq k}^n\delta(s'_j,a,o)+|\{j\in\{0,\dots,n\}-\{k\}\,\mid\,v(s_j)=o\}|+\delta(s_k,a,o)+ eq(v(s_k),o)$. Since $s_k\stackrel{c}{\rightharpoonup}_as'_k$ by definition of lazy convertibility, and by the induction hypothesis, let us further rewrite $\delta(s_k,a)+ eq(v(s_k))$ with $\delta(s'_k,a)+ eq(v(s'_k))$ in $N$. Folding Definition~\ref{defn:sgdo} yields $N=\delta(s',a,o)+ eq(v(s'),o)$.
\end{proof}

The two lemmas above suggest that whenever a player lazily converts a profile to obtain a better outcome, some measure decreases a bit with respect to her preference, but does not change for the other players. The lazy improvement should therefore terminate, and even quite quickly, as proved below. Recall that a finite preference relation $\prec $ has height at most $h$ if there is no chain $s_1 \prec s_2 \prec \ldots \prec s_{h+1}$.

\begin{theorem}\label{thm:lazy-term2}
Consider a game $g$ where player $a$ has an acyclic preference of height $h$. Let $\Delta(g,a)$ be the total number of choices available to player $a$, minus the number of vertices where $a$ is choosing. Then in any sequence (possibly infinite) of lazy improvement, the number of lazy improvement steps performed by player $a$ is bounded by $(h-1)\cdot\Delta(g,a)$.

\begin{proof}
For every outcome $o$ let $h(a,o)$ be the maximal cardinality of the $\prec_a$-chains whose $\prec_a$-maximum is $o$, and note that $o\prec_ao'$ implies $h(a,o)<h(a,o')$. For every profile $s$ let $M(s,a):=\sum_{o\in O}(h(a,o)-1)\cdot\delta(s,a,o)$ and note that $0\leq M(s,a)\leq (h-1)\cdot \Delta(g,a)$ by Observation~\ref{obs:sgdo}.\ref{obs:sgdo1}. Let $s\rightharpoonup_as'$ be a lazy improvement step, so $s\stackrel{c}{\rightharpoonup}_as'$ and $v(s)\prec_av(s')$ by definition, then $M(s,a)-M(s',a)=\sum_{o\in O}(h(a,o)-1)\cdot(\delta(s,a,o)-\delta(s',a,o))=h(a,v(s'))-h(a,v(s))>0$ by Lemma~\ref{lem:lazy-diff}. Let $s \rightsquigarrow_a s'$ be a lazy equilibrium perturbation step. By  Lemma~\ref{lem:lazy-diff} and the definition of $s \sim s'$, we have $M(s,a) = M(s',a)$. Let $s\stackrel{c}{\rightharpoonup_b}s'$ be a lazy conversion step where $b\neq a$, then $M(s,a)=M(s',a)$ by Lemma~\ref{lem:lazy-same}. This shows that the $\rightharpoonup_a$ steps are at most $(h-1)\cdot\Delta(g,a)$ in any sequence of $\rightharpoonup \cup \rightsquigarrow$.
\end{proof}
\end{theorem}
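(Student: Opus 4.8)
The strategy is to build an explicit integer-valued measure $M(s,a)$ attached to each profile $s$ and player $a$, show it strictly decreases along every $\rightharpoonup_a$ step, stays constant along every $\stackrel{c}{\rightharpoonup}_b$ step with $b \neq a$, and is bounded between $0$ and $(h-1)\cdot\Delta(g,a)$. Since the number of strictly-decreasing steps of a bounded integer quantity is at most its range, this immediately bounds the number of $a$-improvement steps in any (possibly infinite) lazy improvement sequence.

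First I would stratify the outcomes by the height of $a$'s preference: for each outcome $o$ set $h(a,o)$ to be the maximal length of a $\prec_a$-chain whose top element is $o$. Acyclicity of $\prec_a$ makes this well-defined and finite, and $o \prec_a o'$ forces $h(a,o) < h(a,o')$; also $1 \le h(a,o) \le h$. Then define $M(s,a) := \sum_{o \in O} (h(a,o)-1)\cdot \delta(s,a,o)$ using the avoided-outcomes function $\delta$ from Definition~\ref{defn:sgdo}. The bound $0 \le M(s,a) \le (h-1)\cdot\Delta(g,a)$ is immediate from Observation~\ref{obs:sgdo}.\ref{obs:sgdo1}, which says $\sum_{o} \delta(s,a,o) = \Delta(s2g(s),a) = \Delta(g,a)$, together with $0 \le h(a,o)-1 \le h-1$.

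Next, the two behavioural claims. For a lazy improvement step $s \rightharpoonup_a s'$ we have $s \stackrel{c}{\rightharpoonup}_a s'$ and $v(s) \prec_a v(s')$; Lemma~\ref{lem:lazy-diff} gives $\delta(s,a) + eq(v(s)) = \delta(s',a) + eq(v(s'))$, so $\delta(s,a,o) - \delta(s',a,o) = eq(v(s'),o) - eq(v(s),o)$, whence $M(s,a) - M(s',a) = (h(a,v(s'))-1) - (h(a,v(s))-1) = h(a,v(s')) - h(a,v(s))$, which is $> 0$ because $v(s) \prec_a v(s')$. For a lazy conversion step $s \stackrel{c}{\rightharpoonup}_b s'$ with $b \neq a$, Lemma~\ref{lem:lazy-same} gives $\delta(s,a) = \delta(s',a)$ directly, so $M(s,a) = M(s',a)$. (The excerpt also mentions an "equilibrium perturbation" step $\rightsquigarrow_a$, which preserves the outcome, so $eq(v(s)) = eq(v(s'))$ and Lemma~\ref{lem:lazy-diff} again yields $M(s,a) = M(s',a)$; but in the pure $\rightharpoonup$ setting of the theorem statement only the first two cases are needed.)

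Finally, assemble: along any sequence with $s_n \rightharpoonup s_{n+1}$, the sequence $M(s_n,a)$ is non-increasing and strictly decreases exactly at the indices $n$ where $s_n \rightharpoonup_a s_{n+1}$; since it stays in $\{0,1,\dots,(h-1)\cdot\Delta(g,a)\}$, there can be at most $(h-1)\cdot\Delta(g,a)$ such indices. I expect the only real work to be bookkeeping: checking that Lemmas~\ref{lem:lazy-same} and~\ref{lem:lazy-diff} are being invoked with the correct player in the correct role, and confirming the telescoping identity $\sum_o (h(a,o)-1)(eq(v(s'),o) - eq(v(s),o)) = h(a,v(s')) - h(a,v(s))$; there is no genuine obstacle since all the heavy lifting already sits in the two lemmas and in Observation~\ref{obs:sgdo}.
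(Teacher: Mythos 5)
Your proposal is correct and follows essentially the same route as the paper's own proof: the same height-weighted measure $M(s,a)=\sum_{o}(h(a,o)-1)\cdot\delta(s,a,o)$, the same use of Observation~\ref{obs:sgdo}.\ref{obs:sgdo1} for the bounds, and the same invocations of Lemmas~\ref{lem:lazy-diff} and~\ref{lem:lazy-same} for the strict decrease under $\rightharpoonup_a$ and invariance under other players' lazy conversions. Your aside about the $\rightsquigarrow_a$ step matches the paper's treatment of that (otherwise undefined, outcome-preserving) case, so nothing is missing.
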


\begin{corollary}\label{cor:all-acycl2}
The lazy improvement terminates for all games iff all preferences are acyclic, in which case the number of sequential lazy improvement steps is at most $(h-1)\cdot (l-1)$ where $h$ bounds the cardinality of the preference chains and $l$ is the number of leaves.
\end{corollary}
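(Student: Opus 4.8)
The plan is to derive both implications from results already in hand, so that the corollary becomes essentially a repackaging of Theorem~\ref{thm:lazy-term2} together with the leaf-counting identity of Observation~\ref{obs:sgdo}.\ref{obs:sgdo2}. For the ``if'' direction, assume every $\prec_a$ is acyclic. Since the tree is finite, only finitely many outcomes are reachable, so each $\prec_a$ has a finite height $h_a$, and by hypothesis $h_a \le h$. Theorem~\ref{thm:lazy-term2} bounds the number of $\rightharpoonup_a$-steps occurring in \emph{any} sequence of collective lazy improvement by $(h_a-1)\cdot\Delta(g,a) \le (h-1)\cdot\Delta(g,a)$. These bounds for the various players all refer to one and the same run, so I may add them: the total number of steps in the sequence is at most $(h-1)\cdot\sum_{a\in A}\Delta(g,a)$. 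By Observation~\ref{obs:sgdo}.\ref{obs:sgdo2} we have $1+\sum_{a\in A}\Delta(g,a)=l$, hence the total is at most $(h-1)(l-1)$. In particular the sequence is finite, so lazy improvement terminates, and by Corollary~\ref{corr:nash-lazy-term} it terminates at a Nash equilibrium.

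For the ``only if'' direction I would argue the contrapositive: if some $\prec_a$ contains a cycle $x_0 \prec_a x_1 \prec_a \dots \prec_a x_n \prec_a x_0$, then lazy improvement fails to terminate on some finite game. I would reuse the witnessing game from the proof of Corollary~\ref{cor:all-acycl}, namely the one-level game in which $a$ owns the root and its $n+1$ children are leaves valued $x_0,\dots,x_n$. The root of the tree lies on every induced play, so lazy convertibility places no restriction on $a$ at the root; hence the infinite sequence cycling through the choices $x_0,x_1,\dots,x_n,x_0,\dots$ is a legitimate run of $\rightharpoonup_a\subseteq\rightharpoonup$ that never terminates. (Alternatively one can simply invoke Corollary~\ref{cor:all-acycl} directly for this half, since it already records the qualitative equivalence; the genuinely new content of the present corollary is only the quantitative bound $(h-1)(l-1)$.)

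I expect no real obstacle. The one point deserving a word of care is that the per-player bounds from Theorem~\ref{thm:lazy-term2} must be combined along a \emph{single} sequence of collective lazy improvement; this is legitimate precisely because that theorem bounds the $\rightharpoonup_a$-steps in any such sequence, not merely in a sequence made up of $a$-steps, so the bounds for distinct players are valid simultaneously and hence additive. A secondary, purely terminological, subtlety is matching ``cardinality of the preference chains is at most $h$'' (as phrased in this corollary) with ``height at most $h$'' (as used in Theorem~\ref{thm:lazy-term2}); these are the same notion, so nothing needs to be proved there.
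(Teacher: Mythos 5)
Your derivation is correct and is exactly the intended one: the paper states this corollary without a separate proof precisely because it follows by summing the per-player bounds $(h-1)\cdot\Delta(g,a)$ of Theorem~\ref{thm:lazy-term2} over a single collective improvement sequence and applying Observation~\ref{obs:sgdo}.\ref{obs:sgdo2} to get $\sum_{a\in A}\Delta(g,a)=l-1$, with the ``only if'' direction supplied by the one-level fan game already used for Corollary~\ref{cor:all-acycl}. Your explicit remark that the per-player bounds hold simultaneously along one run (since Theorem~\ref{thm:lazy-term2} counts $a$-steps within an arbitrary sequence of collective lazy improvement) is the right point to flag, and nothing further is needed.
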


\begin{observation}\label{obs:quadratic}
\begin{enumerate}
\item The maximal length of a lazy improvement sequence is bounded in a quadratic manner in the size of the game in general and linearly when $h$ from Corollary~\ref{cor:all-acycl2} is fixed.
\item\label{rem-quad2} The quadratic and linear bounds are tight.
\end{enumerate}
\end{observation}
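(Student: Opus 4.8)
The first part will follow directly from Corollary~\ref{cor:all-acycl2}. If $g$ is a finite sequential game with $l$ leaves, then $l$ is at most the number of vertices of $g$, and (when every internal vertex has at least two children) the number of vertices is itself $O(l)$, so $l$ is linear in the size $|g|$ of $g$; moreover only outcomes labelling some leaf can ever be induced by a profile, so at most $l$ outcomes are relevant, and hence the length $h$ of the longest preference chain among relevant outcomes satisfies $h\le l$. Corollary~\ref{cor:all-acycl2} then bounds the number of lazy improvement steps by $(h-1)(l-1)\le(l-1)^2=O(|g|^2)$ in general, and by $(h-1)(l-1)=O(|g|)$ when $h$ is held fixed, which is exactly the two assertions of part~1.

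For part~2 it remains to exhibit one family of games witnessing each bound. For the \emph{linear} bound I would use a two-player ``zig-zag comb'' with a fixed $h=2$: players $a,b$, outcomes $\{0,1\}$ with $0\prec_a 1$ and $1\prec_b 0$, and a tree defined recursively --- $g_1$ a single leaf labelled $0$, and $g_{k+1}$ a root owned by $a$ or by $b$ (alternating with $k$) with two children, one a leaf labelled with the dispreferred outcome of the root owner and one the subgame $g_k$. Starting from the profile in which every internal vertex points to its extra leaf, I would check that the maximal lazy improvement sequence switches the spine vertices one by one, from the root downwards --- each switch flipping the induced outcome between $0$ and $1$ --- for a total of $n-1$ steps, while $g_n$ has $n$ leaves; so $(h-1)(l-1)=l-1$ steps are actually performed, and the linear bound is attained exactly. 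The reason nothing is ever undone along the way is precisely Lemma~\ref{lem:avoid-play}: once a player has committed a spine vertex to its subgame, only a vertex below it can again affect the induced outcome.

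For the \emph{quadratic} bound I would build, for each $n$, a game with $\Theta(n)$ leaves, a preference chain of length $\Theta(n)$, and a lazy improvement sequence of length $\Theta(n^2)$. The plan is to take two players carrying essentially opposite linear orders on a chain $c_1\prec c_2\prec\cdots\prec c_n$ of outcomes, placed on a tree of alternating control so that they repeatedly push the induced outcome up, then down, this chain one level at a time; each such ``sweep'' has length $\Theta(n)$ and there are $\Theta(n)$ of them, so the lengths multiply out to $\Theta(n^2)$ while the game stays of size $\Theta(n)$. The substantial part is the bookkeeping: checking that at each of the $\Theta(n^2)$ stages the required improving lazy conversion really is available, that no shortcut collapses a sweep to sublinear length, and that the tree can be kept to $\Theta(n)$ leaves. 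This construction and its verification are carried out in Section~\ref{sec:counter} (see also~\cite{leroux2}).

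The main obstacle is exactly this quadratic family. The quadratic upper bound forces $h=\Theta(n)$ once the game has only $\Theta(n)$ leaves, so the $\Theta(n^2)$ steps cannot be obtained by handing a single player $\Theta(n)$ wide decision vertices --- that alone would already cost $\Theta(n^2)$ leaves --- but must instead come from traversing a small tree $\Theta(n)$ times; and arranging such a long run under the restrictive lazy convertibility relation, which was introduced precisely to make long runs rare, is where the real difficulty lies, whereas part~1 and the linear family are essentially immediate.
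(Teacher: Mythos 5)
Your part~1 and your linear-bound witness are fine. For part~1 the paper gives no explicit argument at all, and your observation that only leaf-labelling outcomes contribute to the measure of Theorem~\ref{thm:lazy-term2} (so that one may take $h\le l$) is the right way to turn Corollary~\ref{cor:all-acycl2} into a quadratic bound in the size of the game. Your linear family (a deep comb with alternating control and $h=2$) differs from the paper's, which uses a flat tree: a root owned by $a$ with $n+1$ children owned by $b$, each child having leaves $x$ and $y$ with $y\prec_a x$ and $x\prec_b y$; a lazy improvement sequence then visits each leaf exactly once, giving $l-1$ steps. Both are correct witnesses.

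The gap is the quadratic lower bound, which you yourself identify as ``the substantial part'' and then do not carry out: you assert that the construction and its verification ``are carried out in Section~\ref{sec:counter}'', but that section contains only improvement cycles and infinite-game counterexamples, not this family, so nothing is actually proved. Your sketch also lacks the mechanism that makes a $\Theta(n^2)$ run possible under lazy convertibility. The paper reuses its flat comb (root $a$ with $n+1$ subtrees owned by $b$, the $i$-th containing leaves $x_i$ and $y$), now with $y\prec_a x_0\prec_a\cdots\prec_a x_n$ and $x_i\prec_b y$ for all $i$ --- note that $b$'s preference still has height $2$, not an ``opposite linear order''. Player $a$ climbs the chain by selecting the subtrees of $x_1,\dots,x_n$ in $n$ lazy steps; then $b$, inside the currently selected subtree, switches from $x_n$ to $y$ (an improvement for $b$ that is never undone, since $b$ prefers $y$ to every $x_i$), which drives $a$ back to $x_0$; the subtree that contained $x_n$ is now permanently worthless to $a$, and one recurses on the remaining subtrees. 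An explicit induction shows this yields $\frac{(n+2)(n+3)}{2}-2$ steps in a game with $2(n+1)$ leaves and an $a$-chain of height $n+2$, matching $(h-1)(l-1)$ up to a constant factor. Without some such device for destroying the top of the chain one subtree at a time, ``sweeping up and down'' does not obviously produce more than one sweep, so the quadratic half of part~2 remains unproved in your proposal.
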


\begin{proof}(of \ref{obs:quadratic}.\ref{rem-quad2}.)
For the linear bound, let us consider the figure below and set $x := x_0 = \dots = x_n$ and $y \prec_a x$ and $x \prec_b y$. There is clearly a lazy improvement sequence starting from the figure and visiting each leaf exactly once.

\begin{tikzpicture}[level distance=7mm]
\node{a}[sibling distance=15mm]
	child{node{b}[sibling distance=8mm] edge from parent[double]
		child{node{$x_0$} edge from parent[double]}
		child{node{$y$}}
	}
	child{node{b}[sibling distance=8mm]
		child{node{$x_1$} edge from parent[double]}
		child{node{$y$}}
	}
	child{node{\dots}}
	child{node{b}[sibling distance=8mm]
		child{node{$x_n$} edge from parent[double]}
		child{node{$y$}}
	};
\end{tikzpicture}

It is similar for the quadratic bound, but we need to be a bit more careful. For $n\in\mathbb{N}$, consider the game in the above figure, where $y\prec_ax_0\prec_ax_1\dots\prec_ax_n$ and $x_i\prec_by$ for all $i$. Let us prove by induction on $n$ the existence of a sequence of $\frac{(n+2)(n+3)}{2}-2$ lazy improvement steps when starting from the strategy profile above. For the base case $n=0$, there are $1=\frac{(0+2)(0+3)}{2}-2$ lazy improvement steps. For the inductive case, let player $a$ make $n$ lazy improvements in a row, by choosing $x_1$, then $x_2$, and so on until $x_n$. At that point, let player $b$ improve from $x_n$ to $y$ and then let player $a$ come back to $x_0$. So far, $n+2$ lazy improvement steps have been performed. Now let us ignore the substrategy profile involving $x_n$ (and $y$). By induction hypothesis, $\frac{(n+1)(n+2)}{2}-2$ additional lazy improvement steps can be performed in a row. Since $(n+2)+\frac{(n+1)(n+2)}{2}-2=\frac{(n+2)(n+3)}{2}-2$, we are done.
\end{proof}

\section{Lazy convertibility as belief updating}
\label{sec:beliefs}
Let us discuss whether we should expect players to conform to lazy convertibility when playing a sequential game repeatedly. Observation \ref{obs:outcomes} tells us that in the short term, a player has no incentive to deviate from lazy convertibility: If she desires some outcome she can reach by some deviation from her current strategy, she can obtain this outcome by converting a strategy in a lazy way. There is a caveat, though, in that restricting convertibility to lazy convertibility changes the overall reachability structure, as the following example shows.

\begin{example}
The last profile of the three-step improvement relation below is a Nash equilibrium that cannot be reached from the first profile under lazy improvement.

{\small
\begin{tabular}{ccccc}
\begin{tikzpicture}[level distance=7mm]
\node{a}[sibling distance=16mm]
	child{node{b}[sibling distance=8mm]edge from parent[double]
		child{node{a}[sibling distance=8mm]
			child{node{$3,3$}}
			child{node{$0,0$}edge from parent[double]}
		}
		child{node{$0,0$}edge from parent[double]}
	}
	child{node{b}[sibling distance=8mm]
			child{node{$2,2$}}
			child{node{$1,1$}edge from parent[double]}
	};
\end{tikzpicture}
&
\begin{tikzpicture}[level distance=7mm]
\node{a}[sibling distance=16mm]
	child{node{b}[sibling distance=8mm]
		child{node{a}[sibling distance=8mm]
			child{node{$3,3$}edge from parent[double]}
			child{node{$0,0$}}
		}
		child{node{$0,0$}edge from parent[double]}
	}
	child{node{b}[sibling distance=8mm]edge from parent[double]
			child{node{$2,2$}}
			child{node{$1,1$}edge from parent[double]}
	};
\end{tikzpicture}
&
\begin{tikzpicture}[level distance=7mm]
\node{a}[sibling distance=16mm]
	child{node{b}[sibling distance=8mm]
		child{node{a}[sibling distance=8mm]edge from parent[double]
			child{node{$3,3$}edge from parent[double]}
			child{node{$0,0$}}
		}
		child{node{$0,0$}}
	}
	child{node{b}[sibling distance=8mm]edge from parent[double]
			child{node{$2,2$}edge from parent[double]}
			child{node{$1,1$}}
	};
\end{tikzpicture}
&
\begin{tikzpicture}[level distance=7mm]
\node{a}[sibling distance=16mm]
	child{node{b}[sibling distance=8mm]edge from parent[double]
		child{node{a}[sibling distance=8mm]edge from parent[double]
			child{node{$3,3$}edge from parent[double]}
			child{node{$0,0$}}
		}
		child{node{$0,0$}}
	}
	child{node{b}[sibling distance=8mm]
			child{node{$2,2$}edge from parent[double]}
			child{node{$1,1$}}
	};
\end{tikzpicture}
\end{tabular}
}
\end{example}

From the perspective of any given player, it however makes a lot of sense to assume that all other players are updating their own strategies only in a lazy way -- assuming that only relevant choices of the other players can be observed. The latter seems to be crucial in order to make the game truly sequential: If all players announced their entire strategy simultaneously, it would be a game in normal form after all.

To formalize this idea, let us fix a player $a$ and consider the game from her perspective. She may consider the game as a two-player game played by her against all other players aggregated into a single player $b$. She starts with some initial strategy $s_a^{(0)}$, and some prior assumption $s_b^{(0)}$ on the strategy of her opponent(s). She then updates her own strategy via lazy improvement to $s_a^{(1)}$. Then the game is actually played, and player $a$ observes the actual moves (but not the strategy) of her opponents. As she only observes the moves along the path actually taken, it is consistent with her observations to assume that the aggregated opponent player lazily converted $s_b^{(0)}$ into some $s_b^{(1)}$. Then the player $a$ again performs a lazy improvement step to $s_a^{(2)}$, plays the game, etc. Provided that the player $a$ has acyclic preferences, Theorem \ref{thm:lazy-term} implies that her own strategy stabilizes to some strategy $s_a$ eventually.

Note that this procedure requires no assumptions on knowledge of rationality of other players or their payoff functions, not to speak of common knowledge. There is in general no reason to assume that the aggregated player $b$ acts according to acyclic preferences (given that the different players making up $b$ may have partially antagonistic preferences). However, if each player has a cyclic preference and performs the same procedure as $a$ above, then each players actual strategy will stabilize. As any change in what a player assumes her aggregated opponents are playing has to be caused by either a change in her own, or someone else's strategy, this implies that also the believed strategy of the aggregated players $s_b$ will stabilize. Furthermore, all the strategy profiles constructed in this way induce the same play, and combining them as follows yields a Nash equilibrium:

\begin{proposition}
Let a set of players play a finite sequential game by converting their own strategies lazily based on beliefs about the other players strategies in order to maximize an acyclic preference relation. Then a Nash equilibrium can be obtained from the stable strategies they will settle to as follows: Along the common path chosen by their stable strategies, everyone follows their own strategy. In any subgame that is not reached, each player plays according to the beliefs held by the player controlling access to the subgame about their strategies.
\begin{proof}
At any vertex reached during the final play, the choice facing the current player is the same one she was anticipating due to her beliefs on her opponents strategies. As her choice is consistent with the stable choice made during the dynamical updating, she has no incentive to change.
\end{proof}
\end{proposition}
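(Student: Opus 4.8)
The plan is to record the two facts about the limit data that the discussion preceding the statement already supplies, then to write down the candidate equilibrium precisely, and finally to verify the Nash condition one player at a time by reducing each player's deviations in the constructed profile to her deviations in the ``two-player game against her own beliefs'', in which she has already stabilised. Throughout I write, for each player $a$, $\hat s_a$ for the strategy $a$ settles to and $\hat\sigma_a$ for $a$'s stabilised belief about the aggregated opponent; both exist by the argument before the statement (the former via Theorem~\ref{thm:lazy-term}, the latter because every change of a belief is triggered by a change of an actual strategy, and all actual strategies stabilise). Let $p$ be the common induced play; as noted, every profile occurring in the dynamics, and in particular each pair $(\hat s_a,\hat\sigma_a)$, induces this same $p$, so $v(\hat s_a,\hat\sigma_a)=v(p)$.

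The first real ingredient is to upgrade ``$a$ no longer lazily improves $\hat s_a$ against $\hat\sigma_a$'' (which holds after stabilisation) to genuine stability: by Observation~\ref{obs:outcomes}, if $a$ could improve against $\hat\sigma_a$ at all she could do so by a lazy step, so in fact $v(\hat s_a,\hat\sigma_a)\not\prec_a v(s'_a,\hat\sigma_a)$ for every $s'_a$. Next I would define the candidate profile $s^\ast$ exactly as in the statement: $s^\ast$ induces $p$, so on every vertex of $p$ each player plays as in her own $\hat s_a$; and for a vertex $t\notin p$, letting $t_1$ be the last vertex of $p$ on the path from the root to $t$, $t_0$ the child of $t_1$ through which that path leaves $p$, and $c:=d(t_1)$ the owner of $t_1$, inside the subgame rooted at $t_0$ player $c$ plays as in $\hat s_c$ and every other player as in $\hat\sigma_c$. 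This is well defined since each vertex off $p$ lies in a unique maximal unreached subgame and $\hat\sigma_c$ is a full strategy for all players other than $c$.

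For the Nash condition, fix a player $b$ and a deviation $s'_b$, and trace the play $q$ induced by $(s'_b,s^\ast_{-b})$ from the root. While $q$ runs along $p$, every non-$b$ move is dictated by $s^\ast_{-b}$, which on $p$ agrees with the stabilised strategies and hence keeps $q$ on $p$; therefore $q$ can leave $p$ only at a $b$-owned vertex $t_1\in p$, entering the maximal unreached subgame rooted at some child $t_0$ of $t_1$, from which it never returns to $p$. Inside that subgame $s^\ast$ has the non-$b$ players play $\hat\sigma_b$ (here $c=b$) while $b$ plays $s'_b$, which is exactly the recipe by which the play of $(s'_b,\hat\sigma_b)$ proceeds below $t_0$; and above $t_1$ both plays follow $p$. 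Hence $v(s'_b,s^\ast_{-b})=v(s'_b,\hat\sigma_b)$ (including the case $q=p$, where both sides equal $v(p)$). Combining this with $v(s^\ast)=v(p)=v(\hat s_b,\hat\sigma_b)$ and the stability of the previous paragraph yields $v(s^\ast)\not\prec_b v(s'_b,s^\ast_{-b})$; as $b$ and $s'_b$ are arbitrary, $s^\ast$ is a Nash equilibrium.

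The only step that needs genuine care — the main obstacle — is the identity $v(s'_b,s^\ast_{-b})=v(s'_b,\hat\sigma_b)$, i.e.\ checking that from $s^\ast$ a $b$-deviation really does behave as $b$'s beliefs predict. This rests on two structural observations: (i) a $b$-deviation can push the play off $p$ only at a $b$-owned vertex, because every other player's on-path choice in $s^\ast$ is her genuine stabilised on-path choice; and (ii) the off-path completion of $s^\ast$ immediately below such a vertex is precisely $b$'s own belief $\hat\sigma_b$. Everything else is routine bookkeeping about how a play is read off a strategy profile in a tree.
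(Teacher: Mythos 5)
Your proposal is correct and follows essentially the same idea as the paper's (very terse, two-sentence) proof: off the final play the profile is completed by the deviator's own stabilised beliefs, so any unilateral deviation yields exactly the outcome it would yield against those beliefs, against which the player has already stopped improving. You have simply made rigorous the bookkeeping (where a deviation can leave the common play, and the identity $v(s'_b,s^\ast_{-b})=v(s'_b,\hat\sigma_b)$) that the paper leaves implicit.
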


In comparison, the investigation of the epistemic foundations of Nash equilibria by \name{Aumann} and \name{Brandenburger} \cite{aumann5} identified mutual knowledge of rationality, knowledge of the game and (in case of more than two players) a common prior as the prerequisite for playing a Nash equilibrium. A subgame perfect equilibrium requires even stronger assumptions, namely well-aware players \cite{aumann4}.

\section{Convergence in infinite games}
\label{sec:infinite}
Infinite sequential games with win/lose preferences are generalizations of Gale-Stewart games \cite{gale2}, and are of great relevance for logic. That any two-player game with antagonistic preferences and a Borel winning set actually has a Nash equilibrium is a highly non-trivial result by \name{Martin} \cite{martin}. This is generalized to multi-player games with non-necessarily antagonistic preferences in \cite{leroux3}. Moreover, subgame-perfect equilibria are not always guaranteed to exist (cf.~\cite{paulyleroux2,solan}).

The definition of lazy improvement applies to infinite sequential games as well, and we can adapt the results on finite games to see that it still constitutes a semi-potential:

\begin{proposition}
Consider an infinite sequential game where each player (there might be countably many) has acyclic preferences. Then lazy improvement is a semi-potential.
\begin{proof}
As argued in Section \ref{sec:definition}, we only need to show that lazy improvement is acyclic. Assume the contrary, then there is some finite cycle $s_1 \rightharpoonup s_2 \rightharpoonup \ldots \rightharpoonup s_n \rightharpoonup s_1$. Let $p_1,\ldots,p_n$ be the paths induced by these strategy profiles, and choose $k \in \mathbb{N}$ such that $p_i|_{\leq k} = p_j|_{\leq k} \Leftrightarrow p_i = p_j$.

By choice of $k$, the path chosen inside any subgame rooted at depth $k$ remains unchanged throughout the improvement cycle. Thus, replacing any such subgame with a leaf carrying the outcome induced by this path has no impact on the improvement cycle. We have obtained a finite sequential game with the same preferences and a cycle built from lazy improvement step, contradicting Theorem \ref{thm:lazy-term}.
\end{proof}
\end{proposition}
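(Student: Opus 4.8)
The plan is to verify the three defining properties of a semi-potential for $\rightharpoonup$: that $\rightharpoonup\,\subseteq\,\twoheadrightarrow$, that the covering condition holds (if $s \twoheadrightarrow s'$ then $s \rightharpoonup s''$ with $v(s') = v(s'')$ for some $s''$), and that $\rightharpoonup$ is acyclic. The first is immediate from the definitions, since $s \stackrel{c}{\rightharpoonup}_a s'$ only alters choices of player $a$ and hence implies $s \stackrel{c}{\twoheadrightarrow}_a s'$, so $\rightharpoonup_a\,\subseteq\,\twoheadrightarrow_a$ for every $a$. The covering condition is exactly Observation~\ref{obs:outcomes}, whose proof makes no use of finiteness. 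So the whole content is in proving acyclicity of $\rightharpoonup$, and this is where I would do all the work.

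For acyclicity, suppose toward a contradiction that there is a cycle $s_1 \rightharpoonup s_2 \rightharpoonup \cdots \rightharpoonup s_m \rightharpoonup s_1$; being a cycle, it is finite. The idea is to cut the (possibly infinite) game down to a finite sequential game on which this cycle survives, so that the finite-game result Theorem~\ref{thm:lazy-term} applies: traversing the cycle forever, each period contains a $\rightharpoonup_a$ step for some $a$, so by pigeonhole some fixed player performs infinitely many lazy improvements and hence has a cyclic preference, contradicting the hypothesis. Let $p_1,\dots,p_m$ be the plays induced by the $s_i$; these are finitely many paths, so I can pick a single depth $k \in \mathbb{N}$ large enough that any two of them which agree up to depth $k$ are in fact equal. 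I would then form a new game by (i) replacing every subtree rooted at a depth-$k$ vertex $t$ by a single leaf, and (ii) pruning, at every vertex of depth $< k$, all children that are not chosen by any of $s_1,\dots,s_m$; since each $s_i$ chooses exactly one child at each vertex, every surviving vertex of depth $< k$ retains between one and $m$ children, so the resulting tree is finitely branching and of depth $\le k$, i.e.\ finite.

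It remains to argue that the $s_i$, restricted to this finite game, still form a $\rightharpoonup$-cycle with the same preferences, and for this the key point is that the play selected inside each depth-$k$ subgame $G_t$ is the same for all $s_i$. Indeed, every change made along the cycle occurs at a vertex lying on the newly induced play, so a change inside $G_t$ at step $s_i \rightharpoonup s_{i+1}$ forces $p_{i+1}$ to pass through $t$; by the choice of $k$ all the $p_j$ passing through $t$ coincide, hence $p_{i+1}$ restricted to $G_t$ is a fixed path, and the change merely (re)selects it. Thus after the first time the cycle enters $G_t$ (if ever) the play selected inside $G_t$ is constant, and if the cycle never enters $G_t$ it is constant trivially. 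Labelling the new leaf at $t$ with the outcome of that fixed play (anything if $G_t$ is never entered) makes each truncated $s_i$ induce a play with outcome $v(p_i)$, the lazy-convertibility steps are preserved because the differing vertices were never in the discarded part, and the preferences are unchanged; so Theorem~\ref{thm:lazy-term} applied to the finite game yields the desired contradiction. I expect the delicate step to be exactly the invariance of the selected play inside the depth-$k$ subgames throughout the cycle — one has to be a little careful with the cyclic bookkeeping, since ``first entry'' has no well-defined meaning on a cycle, and with checking that the pruning in (ii) does not undefine any $s_i$, which is why children chosen off the induced plays must be retained rather than discarded.
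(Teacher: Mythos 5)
Your proof follows the paper's argument exactly: choose $k$ separating the finitely many induced plays, truncate the tree at depth $k$, and derive a contradiction with Theorem~\ref{thm:lazy-term}. The two refinements you add --- pruning unchosen children so that the truncated tree is genuinely finite rather than merely of bounded depth, and the careful cyclic argument that any change inside a depth-$k$ subgame must re-select the unique play through it --- are both correct and in fact tighten points the paper's own brief proof glosses over (truncation alone need not yield a finite tree when branching is infinite, and the constancy of the play inside each depth-$k$ subgame is asserted there without the bookkeeping you supply).
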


Of course, in an infinite game acyclicity does not suffice to ensure termination or even convergence. In fact, \cite[Example 26]{paulyleroux2} (reproduced below as Example \ref{ex:twentysix}) shows that lazy improvement in infinite games will not always converge, and that even accumulation points do not have to be Nash equilibria. There are however several potential ways to extend the results on lazy improvement to infinite sequential games:

\begin{enumerate}
\item We can consider games where the preferences are expressed via continuous payoff functions. For some fixed $\varepsilon > 0$, we can then consider $\varepsilon$-lazy improvement (where only lazy convertibility is allowed, and improvement steps are only taken if the player can improve by more than $\varepsilon$). Then Theorems \ref{thm:lazy-term} and \ref{thm:lazy-term2} carry over, and as a counterpart to Corollary \ref{corr:nash-lazy-term} we find that the terminal profiles of $\varepsilon$-lazy improvement are precisely the $\varepsilon$-Nash equilibria. See \cite{leroux2} for details.
\item We can employ lazy improvement being done in a finitary way with increasing precision, and find that any accumulation point of particular subsequence is guaranteed to be a Nash equilibrium. This was discussed in \cite[Section VI]{paulyleroux2}.
\item We can consider games where the players have win/lose objectives (i.e.~their preference relations have height $2$), and the winning sets are $\Delta^0_2$-sets. Then transfinite iteration of lazy improvement will reach a Nash equilibrium.
\item We can generalize the measure employed in the proof of Theorem \ref{thm:lazy-term2} to infinite games with Lipschitz payoff functions.
\end{enumerate}

The rest of the section provides some details on the third and fourth approaches. We start by formalizing what it means to do a transfinite number of improvement steps. The following definition generalises the notion of finite sequence or $\omega$-sequence induced by a binary relation to $\alpha$-sequence for some ordinal $\alpha$: at limit ordinals, following a valid sequence amounts to picking an "accumulation point".

\begin{definition}[ordinal sequence of a relation]
Let $\to$ be a binary relation on some topological space $S$, and let $\alpha$ be an ordinal number. An $\alpha$-sequence of $\to$ is a family $(s_\beta)_{\beta<\alpha}$ of elements in $S$ such that for all $\beta < \alpha$, if $\beta +1 < \alpha$ then $s_{\beta}\to s_{\beta +1}$, and if $\beta$ is a limit ordinal, then for every $\beta' < \beta$ and every neighbourhood $U$ of $s_\beta$ there exists $\gamma\in]\beta',\beta[$ such that $s_{\gamma}\in U$.
\end{definition}

Observation~\ref{obs:no-def-sink} below says that, given a binary relation over a sequentially compact set, the only reason why an ordinal sequence cannot be further extended is when a sink has been reached.

\begin{observation}\label{obs:no-def-sink}
Let  $(s_\beta)_{\beta<\alpha}$ be a countable ordinal sequence of $\to$ over a sequentially compact set $S$. If $(s_\beta)_{\beta \leq \alpha}$ is not a sequence of $\to$ for any $s_{\alpha}\in S$, then $\alpha = \alpha'+1$ for some $\alpha'$ and $s_{\alpha'}$ is a sink of $\to$.
\end{observation}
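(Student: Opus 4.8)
The plan is to prove the contrapositive, splitting on whether $\alpha$ is a limit ordinal or a successor. First suppose $\alpha$ is a limit ordinal. Since $S$ is sequentially compact and $(s_\beta)_{\beta<\alpha}$ is a countable family, I would extract from it a subsequence converging to some point $s_\alpha\in S$; concretely, fix a cofinal $\omega$-sequence $\beta_0<\beta_1<\cdots$ with supremum $\alpha$ (this exists because $\alpha$ is countable), and apply sequential compactness to $(s_{\beta_n})_n$ to get a convergent sub-subsequence with limit $s_\alpha$. I then claim $(s_\beta)_{\beta\leq\alpha}$ is a valid $(\alpha+1)$-sequence: the successor conditions for $\beta+1<\alpha+1$ are inherited, and the limit condition at any limit ordinal $<\alpha$ is inherited; it only remains to check the limit condition at $\alpha$ itself, i.e. that for every $\beta'<\alpha$ and every neighbourhood $U$ of $s_\alpha$ there is $\gamma\in\,]\beta',\alpha[$ with $s_\gamma\in U$ — but this is immediate since the extracted subsequence is cofinal in $\alpha$ and eventually lies in $U$. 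Hence the sequence extends, contradicting the hypothesis; so if the sequence cannot be extended, $\alpha$ is not a limit ordinal, and since $\alpha=0$ would give the empty sequence which trivially extends (pick any $s_0\in S$), we must have $\alpha=\alpha'+1$.

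It remains to show $s_{\alpha'}$ is a sink. If not, there is some $s$ with $s_{\alpha'}\to s$; then setting $s_\alpha:=s$ makes $(s_\beta)_{\beta\leq\alpha}$ a valid $(\alpha+1)$-sequence, since the only new condition to verify is the successor condition $s_{\alpha'}\to s_\alpha$ (note $\alpha=\alpha'+1$ is a successor, so no limit condition at $\alpha$ is required), and all earlier conditions are inherited. This again contradicts non-extendability. Hence $s_{\alpha'}$ is a sink.

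The main obstacle is the limit-ordinal case, specifically ensuring that a genuinely cofinal subsequence can be extracted so that the "accumulation point" condition at $\alpha$ is met rather than merely some convergent subsequence indexed by a bounded set of $\beta$'s. This is where countability of $\alpha$ is used: it guarantees a cofinal $\omega$-indexed subsequence, to which sequential compactness applies. One should also note in passing the trivial boundary cases ($\alpha=0$, and $s$ possibly equal to some earlier $s_\beta$, which is harmless since the definition of $\alpha$-sequence does not forbid repetitions). Everything else is bookkeeping on the definition of $\alpha$-sequence.
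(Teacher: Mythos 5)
Your argument is correct and is the natural proof of this statement, which the paper leaves unproved (it is stated as an observation in this extended abstract). You correctly isolate the one non-trivial point — that countability of the limit ordinal $\alpha$ gives a cofinal $\omega$-indexed subsequence to which sequential compactness can be applied, and that cofinality is preserved when passing to the convergent sub-subsequence, so the accumulation-point condition at $\alpha$ really holds for every $\beta'<\alpha$ — and the successor and $\alpha=0$ cases are handled as they should be (the latter tacitly assuming $S\neq\emptyset$, which is harmless here).
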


In this paper by countable we mean at most countable. We do find that even in very simple games, we can have improvement sequences of any countable length:
\begin{proposition}
\label{prop:alphasequence}
For every countable ordinal $\alpha$ there exists a win-lose two-player game on a binary tree with open winning set for one player, and an $\alpha$-sequence of lazy improvement in the game.
\end{proposition}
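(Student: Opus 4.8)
The goal is to construct, for each countable ordinal $\alpha$, a win-lose game on a binary tree together with an $\alpha$-sequence of lazy improvement. The plan is to proceed by transfinite induction on $\alpha$, maintaining a stronger inductive invariant that makes the limit step work: namely, that there is a game $G_\alpha$ on a binary tree with an open winning set for player $\mathrm{I}$, a profile $s^{\mathrm{start}}_\alpha$ and a profile $s^{\mathrm{end}}_\alpha$, and a lazy-improvement $\alpha$-sequence from $s^{\mathrm{start}}_\alpha$ whose every term other than $s^{\mathrm{end}}_\alpha$ is losing for player $\mathrm{I}$ (equivalently, avoids the open winning set) and converges to $s^{\mathrm{end}}_\alpha$, which is winning. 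The reason for tracking the ``losing-until-the-end'' condition is that it is exactly what lets a later player prepend one more improvement step, or concatenate infinitely many such games below a fresh root.

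\textbf{Successor step.} Given $G_\alpha$ with its $\alpha$-sequence, form $G_{\alpha+1}$ by adding a new root $r$ owned by player $\mathrm{I}$ whose left subtree is $G_\alpha$ and whose right subtree is a single leaf with outcome ``$\mathrm{I}$ loses''. The winning set stays open. Player $\mathrm{I}$ initially points $r$ to the right (losing) leaf; then we run the inductive $\alpha$-sequence inside the left subtree while $r$ still points right. Since in $G_\alpha$ every profile before the last is losing, all these profiles are losing in $G_{\alpha+1}$ too, and they converge to the profile where the left subtree is $s^{\mathrm{end}}_\alpha$. Now player $\mathrm{I}$ makes one lazy improvement at $r$, switching to the left subtree, which is legal (the vertex $r$ lies on the new induced play) and strictly improves her outcome from losing to winning. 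This yields an $(\alpha+1)$-sequence; the final profile is winning and all earlier ones are losing.

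\textbf{Limit step.} For a countable limit $\lambda$, fix an increasing sequence $\alpha_0 < \alpha_1 < \cdots$ of ordinals cofinal in $\lambda$ with $\sum_n (\text{length contributed by } G_{\alpha_n}) $ arranged to exhaust $\lambda$ -- concretely, we can build a single binary tree by nesting: take $G_{\alpha_0}$, plug it (as above, on the losing side of a new player-$\mathrm{I}$ root) into a copy leading to $G_{\alpha_1}$, and so on, so that the whole structure is an increasing union along $\omega$ of the successor-style constructions, giving a game $G_\lambda$ on a binary tree whose induced-play structure is the concatenation of all the $G_{\alpha_n}$-sequences. Because each stage keeps player $\mathrm{I}$ losing until its own last step, and the last step of stage $n$ is immediately undone as we pass into stage $n+1$ (player $\mathrm{I}$ reverts and descends deeper), the concatenated sequence has order type exactly $\lambda$ and is losing at every term; at the limit ordinal $\lambda$ itself we must exhibit an accumulation point, and here the open winning set is used: the profiles descend deeper and deeper into the tree along a play that is eventually in the winning open set, so the pointwise limit profile $s^{\mathrm{end}}_\lambda$ induces a winning play, and by openness a tail of the sequence already lies in any neighbourhood of it, verifying the limit-ordinal clause of the definition of an ordinal sequence. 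One checks the whole tree is still binary and the winning set still open.

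\textbf{Main obstacle.} The delicate point is the limit step: one must simultaneously (i) keep the tree binary while nesting countably many gadgets, (ii) ensure the concatenated sequence has order type precisely $\lambda$ rather than merely cofinal in it, which forces a careful bookkeeping of how many improvement steps each $G_{\alpha_n}$ contributes and how the ``reversion'' steps between stages are counted, and (iii) verify that the natural pointwise limit of the profiles is genuinely an accumulation point in the product topology and is won by player $\mathrm{I}$ -- this last part is where openness of the winning set (as opposed to mere Borel-ness) is essential, since it guarantees the winning condition is ``witnessed at finite depth'' and hence respected by accumulation. The successor step and the base case $\alpha = 1$ (a one-node decision between winning and losing leaves) are routine by comparison.
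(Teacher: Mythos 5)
This extended abstract does not actually include a proof of Proposition~\ref{prop:alphasequence} (it is deferred to the full version), so I can only assess your argument on its own terms --- and as written it does not work. The central problem is the successor step: you place $G_\alpha$ in the \emph{left} subtree, point the new root $r$ to the \emph{right} (losing) leaf, and then ``run the inductive $\alpha$-sequence inside the left subtree while $r$ still points right.'' But when $r$ points right, every vertex of the left subtree is off the induced play, and by Definition~\ref{defn:lazy-conv} a lazy conversion may only alter choices at vertices lying \emph{on the play induced by the new profile}; hence no lazy conversion whatsoever can modify the left subtree while $r$ points right. Even ignoring convertibility, none of those putative steps would change the induced outcome of the whole game (it stays at the losing right leaf), so none of them is an improvement for any player. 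The restriction you are trying to exploit is exactly the restriction that laziness imposes. A workable nesting has to be the mirror image: the sub-gadget must be \emph{entered} by the initial profile so that its internal sequence is on-path, and the additional improvement step is the one by which the root owner \emph{leaves} it at the end --- which in turn forces a different inductive invariant (the terminal/accumulation profile of the $\alpha$-sequence must be \emph{losing} for the root owner of the next level, with the alternative branch winning for her).

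Your inductive invariant is also internally inconsistent. In a two-player win-lose game an improvement step by a player takes her from losing to winning, so no player can perform two consecutive improvement steps, and the mover at each successor stage alternates; consequently the induced plays along any lazy improvement sequence alternate (roughly) between winning and losing for player $\mathrm{I}$. An $\alpha$-sequence with $\alpha\geq 3$ in which ``every term other than the last is losing for player $\mathrm{I}$'' cannot exist: its steps would all have to be player-$\mathrm{II}$ improvements, two of which cannot be adjacent. Relatedly, the ``reversion'' you invoke in the limit step (player $\mathrm{I}$ undoing her winning move to descend deeper) is not an improvement step for anyone and cannot appear in the sequence; whatever resets player $\mathrm{I}$ to a losing play must itself be a genuine improvement for player $\mathrm{II}$, and arranging this at and across limit stages --- so that the accumulation profile required by the definition of an $\alpha$-sequence is non-terminal and on-path deviations remain available --- is precisely the hard content of the proposition. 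Your identification of openness of the winning set as the reason the limit profile can be winning is a reasonable observation, but the construction carrying it needs to be rebuilt from the ground up around the constraint that only on-path choices can ever change.
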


Lemma~\ref{lem:objectives-open-closed} below is the base case of the proof of Theorem~\ref{thm:lazy-dh}, which is proved by transfinite induction in the difference hierarchy.

\begin{lemma}\label{lem:objectives-open-closed}
Let $g$ be a game with finitely many players who have win-lose objectives. If every winning set is open or closed, every sequence of lazy improvement in $g$ is countable.
\end{lemma}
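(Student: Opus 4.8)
The plan is to argue by contradiction: suppose there is an uncountable, hence at least $\omega_1$-long, sequence of lazy improvement in $g$. The goal is to extract from its first $\omega_1$ terms enough structure to produce a genuine $\omega_1$-sequence inside a \emph{finite} sequential game with the same preferences, which would contradict Theorem~\ref{thm:lazy-term} (or rather Corollary~\ref{cor:all-acycl}, since finite games with acyclic preferences terminate, a fortiori admit no sequence of length $\omega$). The key observation is that a win-lose objective that is open or closed is determined by a finite prefix of the play: if $W$ is clopen-at-$p$ in the sense that membership of $p$ in $W$ or its complement is witnessed by a basic open set, then there is a level $n_p$ such that every play agreeing with $p$ up to depth $n_p$ has the same winning status for every player. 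Here, however, the winning sets are only open \emph{or} closed, not clopen, so a single uniform truncation level need not exist along an arbitrary play; this is the first thing to handle carefully.

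First I would set up the following. Fix the $\omega_1$-sequence $(s_\beta)_{\beta < \omega_1}$ of $\rightharpoonup$ and let $p_\beta$ be the induced play of $s_\beta$. Since each improvement step by a player changes only finitely many choices and strictly improves that player's winning status (from losing to winning, as preferences have height $2$), and since there are finitely many players, I would first argue that only countably many $\beta$ can be \emph{successor} steps that actually change the play: more precisely, by Lemma~\ref{lem:avoid-play}, once a play is abandoned along a run of lazy conversions, only the player who abandoned it can restore it; combined with the height-$2$ preferences this restricts how often each player can act. Actually the cleaner route, and the one I would pursue, is direct truncation: for an \emph{open} winning set $W_a$, a play $p$ is winning for $a$ iff some finite prefix of $p$ already forces membership in $W_a$; for a \emph{closed} $W_a$, $p$ is losing iff some finite prefix forces non-membership. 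So for each play $p$ that occurs in the sequence, and each player $a$, there is a finite "decision depth" $n(p,a)$ beyond which $a$'s winning status is locked in — \emph{provided} $p$'s status is the "witnessed" one (open-winning or closed-losing); in the other case ($a$ loses an open game, or wins a closed game) no finite witness exists.

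The crux is then to show that, along the improvement sequence, the plays that matter can nonetheless be pruned at a uniform finite level. I would argue as follows: consider the tree of plays actually visited. An improvement step by player $a$ from $s_\beta$ to $s_{\beta+1}$ moves $a$ from losing to winning in her own objective. If $W_a$ is open, the new play $p_{\beta+1}$ has a finite witness for $a$'s win, so after this step $a$ is "permanently satisfied" along $p_{\beta+1}$ unless the play is later changed by someone else below $a$'s witness depth — and such changes, being finitely many levels deep, can be bounded. If $W_a$ is closed, then $a$ improving means leaving a play with a finite witness for her loss; again only finitely much of the play below that witness depth is relevant. Formalizing this, I expect to show by induction on the (finite) number of players that there is a finite depth $N$ such that truncating $g$ at depth $N$ — replacing each subgame rooted at depth $N$ by a leaf with an appropriately chosen outcome — yields a finite game in which the tail of the sequence (from some $\beta_0$ on) projects to a valid lazy improvement sequence of length $\omega_1 \geq \omega$, contradicting Corollary~\ref{cor:all-acycl}. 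The main obstacle is precisely this uniform-bounding step: because open and closed sets are mixed and the sequence is transfinite, I must ensure the decision depths do not blow up unboundedly along the sequence; I expect to control this by showing that any player can be "activated" only finitely often in a way that probes ever-deeper levels — essentially a pigeonhole on the finitely many players combined with the monotonicity of the height-$2$ improvement, so that if infinitely many steps probed unboundedly deep, some player would have to improve infinitely often, and Theorem~\ref{thm:lazy-term} would already force a cyclic preference, contradiction.
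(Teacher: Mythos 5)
The paper states this lemma without proof in the present extended abstract (details are deferred to the extended preprint \cite{leroux2}), so there is no in-paper proof to compare against line by line; but your proposal has a gap that is not merely technical --- the overall strategy cannot work. You aim to truncate the game at a uniform finite depth $N$ and project a tail of the hypothetical $\omega_1$-sequence onto a lazy improvement sequence in a \emph{finite} game, then invoke Corollary~\ref{cor:all-acycl} (termination in finite games). But the conclusion you need is only that sequences cannot have length $\omega_1$, whereas finite termination rules out sequences of length $\omega$. Proposition~\ref{prop:alphasequence} shows that even a binary-tree game with a single open winning set admits $\alpha$-sequences of lazy improvement for \emph{every} countable $\alpha$; consequently no reduction of a tail of the dynamics to a finite game can be faithful, since it would equally forbid those (actually existing) $\omega$-sequences. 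The proof must exploit uncountability of the index set in an essential way --- e.g.\ by associating finite witnessing histories with the uncountably many improvement steps of a single player and counting, or by a cofinality argument at $\omega_1$ --- rather than by a compactness-style truncation. Your ``uniform decision depth'' simply does not exist in general: the minimal open (or closed) witnesses along the plays visited by the sequence are unbounded, and that unboundedness is exactly what makes transfinite sequences possible.

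Two further specific errors. First, you appeal to Theorem~\ref{thm:lazy-term} to argue that if steps probed unboundedly deep then some player would improve infinitely often and hence have a cyclic preference; but Theorem~\ref{thm:lazy-term} is stated for finite trees only, and in infinite games a player with a height-$2$ (hence acyclic) preference can improve infinitely often, as Example~\ref{ex:twentysix} and Proposition~\ref{prop:alphasequence} show. Second, your argument never engages with limit stages: in an $\alpha$-sequence the profile at a limit ordinal is an accumulation point of the earlier profiles, not the target of an improvement step, so the play can change at a limit without any player ``acting''; a successor-step-only analysis cannot reach a contradiction at $\omega_1$, and any correct proof must control what happens across limits. (A smaller slip: a lazy conversion in an infinite game may change infinitely many choices, namely all along the new induced play, so ``each improvement step changes only finitely many choices'' is false.)
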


Recall that a subset $S$ of a metric space is called a $\Delta^0_2$-set, if it is expressible both as $S = \bigcap_{i \in \mathbb{N}} U_i$ with open $U_i$, and as $S = \bigcup_{i \in \mathbb{N}} A_i$ with closed $A_i$.

\begin{theorem}\label{thm:lazy-dh}
Let $g$ be a game with finitely many players who have win-lose objectives. If every winning set is $\Delta^0_2$, every sequence of lazy improvement in $g$ is countable.
\end{theorem}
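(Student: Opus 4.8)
The plan is to lift the statement from the open-or-closed case (Lemma~\ref{lem:objectives-open-closed}) to the $\Delta^0_2$ case by a transfinite induction over the difference hierarchy, using Hausdorff's theorem that the $\Delta^0_2$ sets are exactly those obtained by the (countable ordinal-indexed) difference operation applied to open sets. So the first step is to fix, for each player $a$, a representation of the winning set $W_a$ as a Hausdorff difference $W_a = D_{\eta_a}(U^a_\cdot)$ of an increasing $\eta_a$-indexed family of open sets, and to set $\eta := \sup_a \eta_a$, which is a countable ordinal since there are finitely many players. The induction is on $\eta$; the base case $\eta \le 2$ is essentially Lemma~\ref{lem:objectives-open-closed} (an open set and a closed set are the difference-hierarchy sets at levels $1$ and $2$).

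For the inductive step I would argue as follows. Suppose towards a contradiction that there is an uncountable (i.e.\ of length $\omega_1$) lazy improvement sequence $(s_\beta)_{\beta < \omega_1}$. Consider the sequence of induced plays $(p_\beta)_{\beta<\omega_1}$. The key combinatorial fact, analogous to Lemma~\ref{lem:avoid-play} and to the reasoning behind Theorem~\ref{thm:lazy-term}, is that whenever a player $a$ changes her strategy in a way that changes the induced play, the play she is abandoning cannot be re-induced until $a$ herself re-induces it; and for $a$'s preference, re-inducing an abandoned play would mean $a$ preferred the abandoned outcome, i.e.\ $a$ would not have moved away from it. Since each player's objective is win-lose (height $2$), a player only ever moves from a losing play to a winning play and never back. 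The point is then to localize the argument: along an uncountable sequence there is a vertex $t$ and a cofinal set of indices at which changes happen at or below $t$; pushing this down the tree and using that along any branch only finitely many genuinely distinct plays recur for a fixed player, one isolates a subgame on which only "inessential" fluctuations happen. On such a subgame the relevant outcome distinctions are governed by the lower levels of the difference hierarchy — formally, in the subgame one can replace $W_a$ by $D_{\eta_a'}$ for some $\eta_a' < \eta_a$, because the sets $U^a_0$ (or the top of the difference) have stabilized on that subgame — and then the induction hypothesis applies to give a countable bound, contradicting uncountability.

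Concretely the mechanism for dropping a level: fix the uncountable sequence; by sequential compactness of the space of plays pass to an accumulation play $p^\ast$; the open sets $U^a_\gamma$ with $p^\ast \in U^a_\gamma$ are open neighbourhoods of $p^\ast$, so they are entered by all plays $p_\beta$ sufficiently close to $p^\ast$, hence along a cofinal (indeed eventually constant) portion of the sequence; this pins down, for all but countably many $\beta$, the truth value of membership of $p_\beta$ in the "outer layers" of the Hausdorff difference, so that within the tail of the sequence each $W_a$ is effectively computed by a strictly shorter difference, namely by the behaviour on $U^a_{\gamma^\ast_a}$ for the least relevant index — a game to which the induction hypothesis applies after restricting to the appropriate subgame. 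Summing the (countably many) exceptional steps with the (countable, by IH) number of essential steps in the subgame yields that the whole sequence is countable.

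The main obstacle I expect is exactly this localization/level-dropping argument: making precise the claim that on a suitable subgame, or on a tail of the sequence, the $\Delta^0_2$ objective of ordinal rank $\eta$ behaves like one of rank $<\eta$, and ensuring that the subgame one restricts to is genuinely smaller in the difference-hierarchy sense while still carrying the tail of the improvement sequence. This needs a careful interplay of three ingredients: (i) the topological fact that open layers of a Hausdorff difference are decided by any accumulation point of the play sequence, (ii) the avoided-play rigidity of lazy convertibility (Lemma~\ref{lem:avoid-play}) to control which plays can recur and by whom, and (iii) the win-lose/height-$2$ hypothesis to rule out oscillation of a single player. Packaging (i)--(iii) into a clean transfinite induction — rather than an ad hoc case analysis on where in the tree changes accumulate — is where the real work lies; the base case and the bookkeeping of "finitely many players, hence countable supremum of ordinals" are routine.
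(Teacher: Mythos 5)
Your overall strategy --- transfinite induction over the Hausdorff difference hierarchy for $\Delta^0_2$ sets, with Lemma~\ref{lem:objectives-open-closed} (open or closed winning sets) as the base case --- is exactly the strategy the paper announces for Theorem~\ref{thm:lazy-dh}, so at that level you are on the right track. The problem is that everything beyond the base case is left as a sketch, and the inductive step is the entire content of the theorem. You say yourself that ``packaging (i)--(iii) into a clean transfinite induction \ldots{} is where the real work lies''; a proof proposal that defers precisely that work has a genuine gap rather than a proof.

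Concretely, the sketch of the inductive step has several points that do not go through as written. First, you invoke sequential compactness of the space of plays to extract an accumulation play $p^\ast$, but Theorem~\ref{thm:lazy-dh} does not assume finite branching --- that hypothesis is only added in Corollary~\ref{cor:lazy-dh} --- so compactness is not available; you would need a condensation-point or Lindel\"of-type argument instead, and you would need to say on which index set it operates. Second, from $p^\ast$ being an accumulation point of $(p_\beta)$ you conclude that membership of $p_\beta$ in the outer layers $U^a_\gamma$ is ``eventually constant'' along the tail; accumulation only gives cofinally many $\beta$ with $p_\beta$ near $p^\ast$, not all but countably many, so the claim that the objective ``effectively drops a level'' on a tail of the sequence is unjustified as stated. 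Third, the observation that a win--lose player ``only ever moves from a losing play to a winning play and never back'' concerns only her own improvement steps: other players' lazy conversions can return the induced play to one she loses, after which she improves again, so height $2$ by itself does not rule out a single player acting uncountably often --- excluding that is precisely what the induction must accomplish, and the proposed localization to a subgame where the objective has strictly smaller difference rank is asserted rather than established (you do not identify the subgame, nor show that the tail of the improvement sequence lives in it, nor that the restricted winning set genuinely has smaller rank there). Until the level-dropping mechanism is made precise, the argument does not close.
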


\begin{corollary}\label{cor:lazy-dh}
Let $g$ be a game with finite branching and finitely many players who have win-lose objectives. If every winning set is $\Delta^0_2$, every sequence of lazy improvement in $g$ is countable and ends at a Nash equilibrium.

\begin{proof}
By Theorem~\ref{thm:lazy-dh} and Observation~\ref{obs:no-def-sink}, since finite branching implies compactness.
\end{proof}
\end{corollary}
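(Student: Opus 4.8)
The plan is a transfinite induction along the Hausdorff difference hierarchy over the open sets, with Lemma~\ref{lem:objectives-open-closed} serving as the base case. Recall (Hausdorff--Kuratowski) that $\Delta^0_2 = \bigcup_{1 \le \xi < \omega_1} D_\xi(\Sigma^0_1)$, where $D_\xi(\Sigma^0_1)$ is the class of sets obtained as a $\xi$-indexed ``difference'' of an increasing family of open sets; hence every winning set $W_a$ of the game sits in $D_{\xi_a}(\Sigma^0_1)$ or in its dual class for some countable $\xi_a$. Since there are only finitely many players, $\xi := \sup_a \xi_a$ is again a countable ordinal, and every winning set lies in $D_\xi(\Sigma^0_1)$ or in its dual. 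I would then prove, by transfinite induction on $\xi < \omega_1$, that no game whose winning sets all lie in $D_\xi(\Sigma^0_1)$ or its dual admits a lazy improvement sequence of length $\omega_1$; since any uncountable ordinal sequence restricts to one of length $\omega_1$, this yields the theorem. For $\xi = 1$ the winning sets are exactly the open and the closed sets, so this instance is precisely Lemma~\ref{lem:objectives-open-closed}.

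For the inductive step, assume the claim for all ranks below $\xi$ and suppose towards a contradiction that $g$ carries a lazy improvement sequence $(s_\beta)_{\beta < \omega_1}$. For each player $a$ fix an increasing family $(U^a_\eta)_{\eta < \xi}$ of open sets witnessing the membership of $W_a$ in the difference hierarchy, so that for a play $p$ the truth of ``$p \in W_a$'' is decided by the parity of $r_a(p) := \min\{\eta : p \in U^a_\eta\}$ (with a fixed offset, and $p \notin W_a$ if this set is empty). Write $p_\beta$ for the play induced by $s_\beta$.

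The heart of the argument is to show that an $\omega_1$-length lazy improvement sequence cannot keep oscillating the induced play across the boundary of the ``top layer'' of any player's difference representation. Using the regularity of $\omega_1$ together with the rigidity supplied by Lemma~\ref{lem:avoid-play} --- which, whenever a play is abandoned during a run of lazy conversions and later re-induced, forces the player making the final step back to be the same one who first stepped away --- I would isolate a cofinal (hence uncountable) set $C \subseteq \omega_1$ and, for each player $a$, an ordinal $\eta^*_a < \xi$ such that along $C$ the behaviour of the sequence with respect to the layers $U^a_\eta$ with $\eta \ge \eta^*_a$ is constant, so that the restricted sequence can be faithfully simulated in a game where those layers are replaced by their (lower-rank) effect. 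Carrying out this replacement --- discarding the single topmost open set when $\xi$ is a successor, or, when $\xi$ is a limit, first using regularity once more to find a single $\eta^*$ dominating all the $\eta^*_a$ and truncating every family below it --- yields a game of Hausdorff rank strictly below $\xi$ that still admits an uncountable lazy improvement sequence, contradicting the induction hypothesis. Corollary~\ref{cor:lazy-dh} is then immediate, since finite branching makes the space of plays compact, so Observation~\ref{obs:no-def-sink} applies.

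I expect the main obstacle to be exactly this extraction-and-truncation step: making rigorous the sense in which an uncountable lazy improvement sequence must ``freeze'' with respect to each player's top layer, and then verifying that truncating the difference representations is compatible with lazy convertibility --- the open layers are witnessed at various finite depths of the tree, and one must check that collapsing them (ultimately into leaf outcomes or a strictly lower-rank objective) neither creates nor destroys any improvement step, uniformly along the entire restricted sequence. This is where Lemma~\ref{lem:avoid-play} carries the real weight, exactly as it powers Theorem~\ref{thm:lazy-term} in the finite case; the guiding analogy is to choose a play (or a player, or a layer) minimal for the relevant order and to track who is forced to return to it.
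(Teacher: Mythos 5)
Your derivation of the corollary itself is exactly the paper's: Theorem~\ref{thm:lazy-dh} bounds every lazy improvement sequence by a countable ordinal, and finite branching makes the space of strategy profiles (sequentially) compact, so Observation~\ref{obs:no-def-sink} forces any maximal such sequence to terminate at a sink of $\rightharpoonup$, i.e.\ a Nash equilibrium by Corollary~\ref{corr:nash-lazy-term}. Note that the bulk of your proposal is really a sketch of Theorem~\ref{thm:lazy-dh} itself, which the paper states without proof here but describes as a transfinite induction along the difference hierarchy with Lemma~\ref{lem:objectives-open-closed} as base case --- the same strategy you outline, though your inductive step remains a plan with acknowledged gaps and is not needed to establish the corollary from the quoted results.
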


Regarding a potential extension of Corollary \ref{cor:lazy-dh} to winning sets beyond $\Delta^0_2$ we shall make a tangential remark: The computational task of finding a Nash equilibrium in a two-player sequential game with $\Delta^0_2$ winning sets is just as hard as iterating the task of finding an accumulation point of a sequence over some countable ordinal. This follows from results in \cite{paulyleroux3-cie,paulyleroux3-arxiv,pauly-ordinals,gherardi4}. Finding a Nash equilibrium of a game with $\Sigma^0_2$ winning sets is strictly more complicated. Thus, $\Delta^0_2$ seems to be a natural boundary for results of the form of Corollary \ref{cor:lazy-dh}.

We shall now proceed to discuss the fourth approach, based on Lipschitz payoff functions.

\begin{definition}[Fair improvement\footnote{This is \emph{fair} as in \emph{fair scheduler}, not as in \emph{fair division of cake}.}]
An infinite sequence of improvements is fair if the following holds: for all positive real numbers $r$, if improvements by more than $r$ are possible infinitely often during the sequence, they also occur infinitely often.
\end{definition}

%\begin{lemma}[Continuity of the improvement]
%VAGUE Consider a game with continuous payoffs. Let $s \to_a t$ be an improvement step by player $a$. Let $d$ be a distance over the strategy profiles that generates the usual topology. There exists $\delta > 0$ such that for all $s'$ such that $d(s,s') < \delta$ there exists $t'$ such that $d(t,t') < \delta$ and $s' \to_a t'$. (And likewise for $\rightharpoonup_a$ instead of $\to_a$.)
%\end{lemma}

\begin{proposition}\label{prop:Lipschitz-fair-lazy}
If the game tree is binary and if for each player there exists $\eta > 2$ such that her payoff function is Lipschitz-continuous for the distance $d$ defined by $d(h0\rho,h1\rho') = \frac{1}{\eta^{|h|}}$, then all the accumulation points of a fair lazy improvement sequence are Nash equilibria.

\begin{proof}
To all strategy profiles $s$ and all players $a$ let us associate a real number:

 \[M_a(s) := \sum_{h\in d^{-1}(a)} v_a \big( h\cdot (1-s(h)) \cdot \rho(h\cdot (1-s(h)),s) \big) - \min_{\rho\in \{0,1\}^\omega}(v_a(h\rho))\]

\noindent where $d(h)$ is the player that plays at history $h$, and $v_a(\rho)$ is the payoff for player $a$ and run $\rho$, and $s(h)$ is the choice in $\{0,1\}$ that is prescribed by $s$ at $h$, and $\rho(h,s)$ is the run induced by strategy profile $s$ from $h$ on. Similarly to the finite case $v_a( h\cdot (1-s(h)) \cdot \rho(h\cdot(1-s(h)),s))$ is the payoff that is avoided by $a$ at history $h$. Note that the summands of $M_a(s)$ are all non-negative by definition of the minimum, and that the sum converges absolutely: indeed, by assumption $|v_a(h0\rho)-v_a(h1\rho')| \leq \frac{L_a}{\eta^{|h|}}$ for some $L_a > 0$ and for all $h$, $\rho$, and $\rho'$, so $M_a(s) \leq \sum_{h\in \{0,1\}^*} \frac{L_a}{\eta^|h|} = L_a \sum_{l = 0}^{+\infty}(\frac{2}{\eta})^l = \frac{L_a}{1-\frac{2}{\eta}}$. Also, each $M_a$ is continuous.

Similarly to the finite case, $M_a$ decreases by $x$ when player $a$ performs a lazy improvement by $x$, and $M_a$ is left unchanged when another player performs a lazy convertibility step. Let $(s_n)_{n \in \mathbb{N}}$ be a fair lazy improvements sequence,
%, and for every player $a$ let $m_a := \inf_n M_a(s_n)$, so $M_a(s) = m_a$ for all accumulation points $s$ of the sequence. Let us assume that the lazy improvement sequence is fair
and let $s$ be some accumulation point that is not an Nash equilibrium. So $s \rightharpoonup_a t$ for some profile $t$ and player $a$. By continuity of the payoffs we can assume wlog that $s$ and $t$ only disagree on a finite set $H_0$ of histories shorter than some $k$. Let $r := v_a(t) - v_a(s) > 0$ and $\delta > 0$ be such that $|v_a(\sigma) - v_a(\sigma') | < \frac{r}{4}$ whenever $d'(\sigma,\sigma') < \delta$ with $d'(\sigma,\sigma') := \frac{1}{2^i}$ where $i$ is maximal such that $\sigma$ and $\sigma'$ coincide for histories not longer than $i$. By the Bolzano-Weierstrass theorem there is a subsequence $(s_{\varphi(n)})$ converging towards $s$, and $d'(s,s_{\varphi(n)}) < \min(\delta, \frac{1}{2^k})$ for all but finitely many $n$. When the inequality holds, let $t_n$ coincide with $s_{\varphi(n)}$ outside of $H_0$ and with $t$ on $H_0$. So $s_{\varphi(n)}  \stackrel{c}{\rightharpoonup}_a t_n$ and $d'(t,t_n) = d'(s,s_{\varphi(n)})$, so $|v_a(t_n) - v_a(t)| < \frac{r}{4}$, and $ \frac{r}{2} < v_a(t_n) - v_a(s_{\varphi(n)})$. Thus improvement by at least $\frac{r}{2}$ are infinitely often possible for player $a$ in the sequence. By fairness infinitely many of them must occur, so $M_a$ diverges to $-\infty$, contradiction.
\end{proof}
\end{proposition}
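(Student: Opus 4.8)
The plan is to mirror the structure of the finite-case proof (Theorem~\ref{thm:lazy-term2}), replacing the integer-valued measure $M(s,a)$ by the real-valued, absolutely convergent quantity $M_a(s)$ that sums, over all histories controlled by $a$, the ``regret'' $v_a$ of the branch not taken at $h$ (prolonged by the run that $s$ prescribes) minus the worst outcome obtainable below that branch. The first thing I would establish are the two structural facts quoted without proof in the displayed argument: (i) if player $a$ performs a lazy improvement step that improves her payoff by $x$, then $M_a$ decreases by exactly $x$; and (ii) if any other player performs a lazy convertibility step, $M_a$ is unchanged. Both are the infinitary analogues of Lemmas~\ref{lem:lazy-diff} and~\ref{lem:lazy-same}: for (ii) one observes that a lazy conversion by $b\neq a$ only alters $s$ at histories owned by $b$ lying on the new induced play, hence never alters any of the avoided branches of $a$ nor the runs prescribed below them (those avoided by $a$ sit off $b$'s new play entirely, or are unaffected because $b$'s change is confined to the play itself); for (i) one isolates the single history $h^\*$ on the old induced play where $a$ redirects, notes the avoided-branch term at $h^\*$ changes from (old continuation, which was on the play) to (newly-avoided branch), while the new play sweeps in a new avoided branch, and the telescoping is exactly the difference of the two payoffs. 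Absolute convergence and continuity of $M_a$ are already handled in the excerpt via the Lipschitz bound $|v_a(h0\rho)-v_a(h1\rho')|\leq L_a/\eta^{|h|}$ and the geometric series $\sum(2/\eta)^l<\infty$, which needs $\eta>2$.

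Given (i) and (ii), the main argument is a fairness-plus-compactness contradiction. Suppose $s$ is an accumulation point of a fair lazy improvement sequence $(s_n)$ but not a Nash equilibrium, so by Corollary~\ref{corr:nash-lazy-term} there is $t$ and a player $a$ with $s\rightharpoonup_a t$. First I would truncate: by continuity of $v_a$, we may assume $s$ and $t$ differ only on a finite set $H_0$ of histories of length $<k$ while still $v_a(t)>v_a(s)$; set $r:=v_a(t)-v_a(s)>0$. Choose $\delta>0$ with $|v_a(\sigma)-v_a(\sigma')|<r/4$ whenever the profiles agree on all histories up to the length determined by $d'(\sigma,\sigma')<\delta$. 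Pick a subsequence $s_{\varphi(n)}\to s$; for large $n$ we have $d'(s,s_{\varphi(n)})<\min(\delta,2^{-k})$, so $s_{\varphi(n)}$ agrees with $s$ on all of $H_0$, whence $s_{\varphi(n)}\stackrel{c}{\rightharpoonup}_a t_n$ where $t_n$ is $s_{\varphi(n)}$ patched with $t$ on $H_0$. Then $d'(t,t_n)=d'(s,s_{\varphi(n)})<\delta$ gives $|v_a(t_n)-v_a(t)|<r/4$ and $|v_a(s_{\varphi(n)})-v_a(s)|<r/4$, so $v_a(t_n)-v_a(s_{\varphi(n)})>r-r/2=r/2$. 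Hence $a$ can improve by more than $r/2$ at infinitely many stages; by fairness she actually does so infinitely often; each such step drops $M_a$ by more than $r/2$, and since intervening steps by others leave $M_a$ unchanged and $a$'s own improvements only decrease it, $M_a(s_n)\to-\infty$, contradicting $M_a\geq 0$.

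The step I expect to require the most care is (i)–(ii), the infinitary bookkeeping lemma, because in the finite case it was proved by a somewhat intricate induction on the profile tree (the rewriting with the $|\{j\neq x\mid v(s_j)=o\}|+eq(v(s_x),o)$ identity), and here there is no well-founded induction on the tree to lean on; instead one must argue directly at the level of the defining infinite sum, carefully matching avoided-branch terms before and after the move history by history, and checking that the change is supported on exactly one summand for an improvement by $a$ and on no summand for a conversion by $b\neq a$. One subtlety worth flagging is that when $a$ changes her choice at $h^\*$, the runs $\rho(h\cdot(1-s(h)),s)$ prescribed below avoided branches at ancestors of $h^\*$ that themselves lie on the play can shift — but those branches are off the play, $a$ does not move there under lazy convertibility, so the prescribed run below them is literally unchanged; it is only the summand at $h^\*$ itself, together with the fresh avoided branch introduced along the new play, that moves, and their net contribution telescopes to $v_a(t)-v_a(s)$. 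Once this is pinned down, the rest is the routine compactness/fairness packaging already sketched above. Note also that the absolute convergence estimate uses $\eta>2$ essentially, and that finite branching is not needed here — only that the tree is binary, so that $d'$ and $d$ are comparable and Bolzano–Weierstrass applies in $\{0,1\}^\omega$-indexed profile space.
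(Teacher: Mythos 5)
Your proposal follows essentially the same route as the paper's own proof: the same measure $M_a$, the same absolute-convergence estimate via $\eta>2$, the same decrease/invariance facts for lazy steps, and the same truncation--subsequence--fairness contradiction with the $r/4$ estimates. In fact you supply more detail than the paper does on the infinitary analogues of Lemmas~\ref{lem:lazy-diff} and~\ref{lem:lazy-same}, which the paper dispatches with ``similarly to the finite case''; your term-by-term matching argument for those two facts is the right way to fill that gap.
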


\section{Some counterexamples}
\label{sec:counter}
In order to obtain the termination result in the finite case (Theorem \ref{thm:lazy-term}), some restriction on how players can improve is indeed necessary. We shall show below that the collective improvement $\twoheadrightarrow$ may fail to terminate even for very simple games in extensive form:

\begin{example}
An improvement cycle:\\
\begin{tabular}{cccc}
\begin{tikzpicture}[level distance=7mm]
\node{a}[sibling distance=16mm]
	child{node{b}[sibling distance=8mm] edge from parent[double]
		child{node{$1,0$} edge from parent[double]}
		child{node{$0,1$}}
	}
	child{node{b}[sibling distance=8mm]
			child{node{$1,0$}}
			child{node{$0,1$} edge from parent[double]}
	};
\end{tikzpicture}
&
\begin{tikzpicture}[level distance=7mm]
\node{a}[sibling distance=16mm]
	child{node{b}[sibling distance=8mm] edge from parent[double]
		child{node{$1,0$}}
		child{node{$0,1$} edge from parent[double]}
	}
	child{node{b}[sibling distance=8mm]
			child{node{$1,0$} edge from parent[double]}
			child{node{$0,1$}}
	};
\end{tikzpicture}
&
\begin{tikzpicture}[level distance=7mm]
\node{a}[sibling distance=16mm]
	child{node{b}[sibling distance=8mm]
		child{node{$1,0$}}
		child{node{$0,1$} edge from parent[double]}
	}
	child{node{b}[sibling distance=8mm] edge from parent[double]
			child{node{$1,0$} edge from parent[double]}
			child{node{$0,1$}}
	};
\end{tikzpicture}
&
\begin{tikzpicture}[level distance=7mm]
\node{a}[sibling distance=16mm]
	child{node{b}[sibling distance=8mm]
		child{node{$1,0$} edge from parent[double]}
		child{node{$0,1$}}
	}
	child{node{b}[sibling distance=8mm] edge from parent[double]
			child{node{$1,0$}}
			child{node{$0,1$} edge from parent[double]}
	};
\end{tikzpicture}
\end{tabular}
\end{example}

The technical notion of strategy that is used in this article to represent the intuitive concept of a strategy (in games in extensive form) is not the only possible notion. An alternative notion does not require choices from a player at every node that she owns, but only at nodes that are not ruled out by the strategy of the same player. The three objects in Example \ref{ex:irrelevant} are such minimalist, alternative strategy profiles, where double lines still represent choices. Up to symmetry, they constitute from left to right a cycle of improvements that could be intuitively described as lazy, so an actual cycle of length eight can easily be inferred from the short pseudo cycle. This may happen because, although the improvements may look lazy, player $a$ forgets about her choices in a subgame (of the root) when leaving it, and may settle for different choices when coming back to the subgame. This suggests that even counter-factual choices are sometimes relevant. In particular, this means that lazy improvement is not a \emph{natural} dynamics in the sense of \name{Hart} \cite{hart2}; or a \emph{simple} model in the sense of \name{Roth} and \name{Erev} \cite{erev}.

\begin{example}
\label{ex:irrelevant} Let $W$ be winning for player $a$ and $L$ be losing; and vice versa for player $b$. \\
\begin{tabular}{ccc}
\begin{tikzpicture}[level distance=7mm]
\node{a}[sibling distance=22mm]
	child{node{b}[sibling distance=12mm] edge from parent[double]
		child{node{a}[sibling distance=5mm] edge from parent[double]
			child{node{$W$} edge from parent[double]}
			child{node{$L$}}
		}
		child{node{a}[sibling distance=5mm]
			child{node{$W$}}
			child{node{$L$} edge from parent[double]}
		}
	}
	child{node{b}[sibling distance=12mm]
		child{node{a}[sibling distance=5mm] edge from parent[double]
			child{node{$W$}}
			child{node{$L$}}
		}
		child{node{a}[sibling distance=5mm]
			child{node{$W$}}
			child{node{$L$}}
		}
	}
	;
\end{tikzpicture}
&
\begin{tikzpicture}[level distance=7mm]
\node{a}[sibling distance=22mm]
	child{node{b}[sibling distance=12mm] edge from parent[double]
		child{node{a}[sibling distance=5mm]
			child{node{$W$} edge from parent[double]}
			child{node{$L$}}
		}
		child{node{a}[sibling distance=5mm] edge from parent[double]
			child{node{$W$}}
			child{node{$L$} edge from parent[double]}
		}
	}
	child{node{b}[sibling distance=12mm]
		child{node{a}[sibling distance=5mm] edge from parent[double]
			child{node{$W$}}
			child{node{$L$}}
		}
		child{node{a}[sibling distance=5mm]
			child{node{$W$}}
			child{node{$L$}}
		}
	}
	;
\end{tikzpicture}
&
\begin{tikzpicture}[level distance=7mm]
\node{a}[sibling distance=22mm]
	child{node{b}[sibling distance=12mm]
		child{node{a}[sibling distance=5mm]
			child{node{$W$}}
			child{node{$L$}}
		}
		child{node{a}[sibling distance=5mm] edge from parent[double]
			child{node{$W$}}
			child{node{$L$}}
		}
	}
	child{node{b}[sibling distance=12mm] edge from parent[double]
		child{node{a}[sibling distance=5mm] edge from parent[double]
			child{node{$W$} edge from parent[double]}
			child{node{$L$}}
		}
		child{node{a}[sibling distance=5mm]
			child{node{$W$}}
			child{node{$L$} edge from parent[double]}
		}
	}
	;
\end{tikzpicture}
\end{tabular}
\end{example}

The example below shows that for infinite games, a sequence of lazy improvement steps may have multiple accumulation points even for continuous payoff functions; and moreover, that not all accumulation points have to be Nash equilibria.
\begin{example}[{\cite[Example 26]{paulyleroux2}}]
\label{ex:twentysix}
\end{example}

\begin{wrapfigure}{r}{0.5\textwidth}
\begin{tikzpicture}[level distance=8mm]
\node{c}[sibling distance=25mm]
	child{node{$\alpha_0,\beta_0,\gamma_0,\delta_0$} edge from parent[solid,double]}
	child{node{d}[sibling distance=25mm]
		child{node{$\alpha_1,\beta_1,\gamma_1,\delta_1$} edge from parent[double]}
		child{node{c} [sibling distance=25mm] edge from parent[dashed]
		child{node{$\alpha_n,\beta_n,\gamma_n,\delta_n$} edge from parent[double,solid]}
			child{node{} edge from parent[dashed]
				child{node{} edge from parent[draw=none]}
				child{node{$\alpha,\beta,\gamma,\delta$} edge from parent[dashed]}
			}
		}
	};
\end{tikzpicture}
\end{wrapfigure}

Let us consider games with four players $a$, $b$, $c$, and $d$. Given four real-valued sequences $\mathcal{A}=(\alpha_n)_{n\in\mathbb{N}}$, $\mathcal{B}=(\beta_n)_{n\in\mathbb{N}}$, $\mathcal{C}=(\gamma_n)_{n\in\mathbb{N}}$, and $\mathcal{D}=(\delta_n)_{n\in\mathbb{N}}$ converging towards $\alpha$, $\beta$, $\gamma$, and $\delta$, let $T(\mathcal{A},\mathcal{B},\mathcal{C},\mathcal{D})$ be the following game and strategy profile. Note that apart from the payoffs, the underlying game effectively involves players $c$ and $d$ only. If $\mathcal{C}$ and $\mathcal{D}$ are increasing, the lazy improvement dynamics sees players $c$ and $d$ alternating in switching their top left-move to a right-move.

Let $\mathcal{A} := \mathcal{B} := (1+\frac{1}{n+1})_{n\in\mathbb{N}}$ and let $\mathcal{C} := \mathcal{D} := (1-\frac{1}{n+1})_{n\in\mathbb{N}}$. Starting from the profile below, players $c$ and $d$ will continue to unravel the subgame currently chosen jointly by $a$ and $b$. Player $b$ will keep alternating her choices to pick the least-unraveled subgame available to her. Player $a$ will prefer to chose a subgame where player $b$ currently chooses right, and also prefers less-unraveled subgames.

\begin{tikzpicture}[level distance=8mm]
\node{a}[sibling distance=45mm]
	child{node{b}[sibling distance=22mm] edge from parent[double]
		child{node{\footnotesize $T(\mathcal{A},\mathcal{B},\mathcal{C},\mathcal{D})$}}
		child{node{\footnotesize $T(1+\mathcal{A},\mathcal{B},\mathcal{C},\mathcal{D})$} edge from parent[double]}
	}
	child{node{b}[sibling distance=22mm]
		child{node{\footnotesize $T(\mathcal{A},\mathcal{B},\mathcal{C},\mathcal{D})$}}
		child{node{\footnotesize $T(1+\mathcal{A},\mathcal{B},\mathcal{C},\mathcal{D})$} edge from parent[double]}
	};
\end{tikzpicture}

First of all, already the subgame where $b$ moves first demonstrates that the lazy improvement dynamics will not always converge, hence we have to consider accumulation points rather than limit points. For the next feature, note that there is an infinite sequence of lazy
improvement where players $a$ and $b$ (at both nodes that she owns) switch
infinitely often, and where player $a$ switches only when player $b$
chooses the right subgame (on the induced play). Then the following
strategy profile
is an accumulation point, but it is clearly not a Nash equilibrium.

\begin{tikzpicture}[level distance=8mm]
\node{a}[sibling distance=45mm]
	child{node{b}[sibling distance=22mm] edge from parent[double]
		child{node{$1,1,1,1$} edge from parent[double,dashed]}
		child{node{$2,1,1,1$}edge from parent[dashed]}
	}
	child{node{b}[sibling distance=22mm]
		child{node{$1,1,1,1$}edge from parent[dashed]}
		child{node{$2,1,1,1$} edge from parent[double,dashed]}
	};
\end{tikzpicture}

In our current model the players perform lazy improvement updates in a sequential manner. If simultaneity were allowed (yet not compulsory), cycles could occur, as shown in the example below.

\begin{example} It is a cycle up to symmetry only, a proper cycle of length $4$ may be easily derived from it.
\begin{tabular}{ccccc}
\begin{tikzpicture}[level distance=7mm]
\node{a}[sibling distance=16mm]
	child{node{b}[sibling distance=8mm] edge from parent[double]
		child{node{a}[sibling distance=8mm]edge from parent[double]
			child{node{$3,2$}}
			child{node{$2,0$}edge from parent[double]}
		}
		child{node{$1,1$}[sibling distance=8mm]}
	}
	child{node{b}[sibling distance=8mm]
		child{node{$1,1$}[sibling distance=8mm]}
		child{node{a}[sibling distance=8mm]edge from parent[double]
			child{node{$2,0$}}
			child{node{$3,2$}edge from parent[double]}
		}
	}
	;
\end{tikzpicture}
&
\begin{tikzpicture}[level distance=7mm]
\node{a}[sibling distance=16mm]
	child{node{b}[sibling distance=8mm] edge from parent[double]
		child{node{a}[sibling distance=8mm]
			child{node{$3,2$}edge from parent[double]}
			child{node{$2,0$}}
		}
		child{node{$1,1$}[sibling distance=8mm]edge from parent[double]}
	}
	child{node{b}[sibling distance=8mm]
		child{node{$1,1$}[sibling distance=8mm]}
		child{node{a}[sibling distance=8mm]edge from parent[double]
			child{node{$2,0$}}
			child{node{$3,2$}edge from parent[double]}
		}
	}
	;
\end{tikzpicture}
&
\begin{tikzpicture}[level distance=7mm]
\node{a}[sibling distance=16mm]
	child{node{b}[sibling distance=8mm]
		child{node{a}[sibling distance=8mm]edge from parent[double]
			child{node{$3,2$}edge from parent[double]}
			child{node{$2,0$}}
		}
		child{node{$1,1$}[sibling distance=8mm]}
	}
	child{node{b}[sibling distance=8mm]edge from parent[double]
		child{node{$1,1$}[sibling distance=8mm]}
		child{node{a}[sibling distance=8mm]edge from parent[double]
			child{node{$2,0$}edge from parent[double]}
			child{node{$3,2$}}
		}
	}
	;
\end{tikzpicture}
\end{tabular}
\end{example}

This behaviour can be avoided by considering lazy best-response dynamics, rather than merely lazy better-response. In the sequential case, clearly the termination of the latter implies termination of the former. In the simultaneous case, we have:

\begin{proposition}\label{prop:synclazybestresp}
The synchronous lazy best-response sequences in a game with $n$ internal nodes have length at most $2^n$, provided that the players have acyclic preferences.

\begin{proof}
It suffices to prove the claim for preferences that are linear orders, which we prove by induction on the number of internal nodes of the game $g$. (It holds for zero.) Let $v$ be an internal node in $g$ whose children are all leaves, let $a$ be the owner of $v$, and let us consider a sequence where $a$ always chooses the same outcome $x$ at $v$. Let $g'$ be the game derived from $g$ by replacing $v$ with a leaf enclosing the outcome $x$. The synchronous lazy best-response sequence in $g$ corresponds, by restriction of the profiles, to a sequence in $g'$, so it has length at most $2^{n-1}$ by I.H. Now let us consider an arbitrary sequence, and note that $a$ can change choices only once at $v$, from some non-preferred outcome to her preferred one (among the outcomes occurring below $v$). So the length of a sequence in $g$ is at most $2^{n-1} + 2^{n-1} = 2^{n}$.
\end{proof}
\end{proposition}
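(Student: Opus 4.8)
The plan is to bound the length by a straightforward induction on the number $n$ of internal nodes of the game tree, mirroring the way one proves a potential-style bound. The base case $n=0$ is trivial: the tree is a single leaf, no player ever has a choice, so the only synchronous lazy best-response ``sequence'' is the single profile, of length $1=2^{0}$. Before the inductive step I would first reduce to the case in which every preference relation is a linear order on $O$: each acyclic $\prec_a$ extends to a linear order, and one then argues that an upper bound on the length of synchronous lazy best-response sequences for all games with linear preferences implies the same bound for acyclic preferences. As the discussion below indicates, making this reduction precise is the most delicate point, and it is the place where acyclicity alone would not be enough to run the induction cleanly.

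For the inductive step, take a game $g$ with $n\geq 1$ internal nodes and linear preferences, and pick an internal node $v$ all of whose children are leaves (such a $v$ exists since the tree is finite); let $a:=d(v)$ be its owner and let $x_{0},\dots,x_{m}$ be the outcomes at the children of $v$. For an outcome $x$ occurring below $v$, let $g_{x}$ be the $(n-1)$-internal-node game obtained from $g$ by deleting the children of $v$ and making $v$ itself a leaf carrying $x$. The first step is the observation that any synchronous lazy best-response sequence in $g$ \emph{along which $a$ never changes her choice at $v$} --- say she always selects the child with outcome $x$ --- restricts, by forgetting the frozen subtree below $v$, to a synchronous lazy best-response sequence in $g_{x}$: lazy convertibility is unaffected by deleting never-touched choices, and, since the play below $v$ is frozen, a best response in $g$ projects to a best response in $g_{x}$. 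By the induction hypothesis such a sequence has length at most $2^{n-1}$.

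The crux is then the claim that along \emph{any} synchronous lazy best-response sequence in $g$, player $a$ changes her choice at $v$ at most once. Granting this, the sequence decomposes into at most two consecutive blocks on each of which $a$'s choice at $v$ is constant; each block restricts to a sequence in some $g_{x}$ and so has length at most $2^{n-1}$, whence the total length is at most $2^{n-1}+2^{n-1}=2^{n}$, as required. To prove the claim, note first that lazy convertibility prevents $a$ from altering her choice at $v$ unless $v$ lies on the play induced by the resulting profile; and whenever $v$ lies on the induced play the outcome of the game \emph{is} the outcome at the child of $v$ that $a$ selects, so a best-response move by $a$ that touches $v$ must place her choice on the $\prec_a$-greatest of $x_{0},\dots,x_{m}$ --- here linearity gives a \emph{unique} such child. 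Once $a$'s choice at $v$ is this optimal child, any later move by $a$ in which $v$ ends up on the play must leave that choice in place (it is already optimal there), and any move in which $v$ ends up off the play does not touch $v$ at all; hence there is no second change.

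The step I expect to be the main obstacle is discharging this last claim in the genuinely \emph{synchronous} regime: several players, possibly including $a$ and owners of ancestors of $v$, may move in the same round, so the play through $v$ can be rerouted by other players while $a$ moves. One must check carefully that when $a$ actually rewrites her choice at $v$ the resulting play really does pass through $v$ (so the ``best response forces the maximal child'' argument applies), and that $a$ cannot be driven to flip her choice at $v$ back and forth between two children yielding the same or $\prec_a$-incomparable outcomes --- it is precisely to exclude the incomparable case that one wants linear, rather than merely acyclic, preferences, which is what the initial reduction must secure. Verifying that that reduction does not shorten sequences, and that ``best response'' is preserved under passing to $g_{x}$ in the presence of simultaneous moves, are the remaining pieces of careful bookkeeping.
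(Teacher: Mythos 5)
Your proposal is correct and follows essentially the same route as the paper: induct on the number of internal nodes, reduce to linear preferences, pick a bottom-most internal node $v$, show its owner flips her choice there at most once (lazy convertibility forces $v$ onto the new play, and a best response then pins the choice to the $\prec_a$-maximal child), and split the sequence into at most two blocks each projecting to a game with $n-1$ internal nodes, giving $2^{n-1}+2^{n-1}=2^{n}$. The concerns you flag about synchrony and the linear-order reduction are reasonable points of care, but they do not change the argument, which matches the paper's.
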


\section*{Acknowledgements}
This work benefited from the Royal Society International Exchange Grant IE111233 (while Le Roux was at the TU Darmstadt and Pauly at the University of Cambridge). The authors were partially supported by the ERC inVEST (279499)
project.

We are grateful to Dietmar Berwanger, Victor Poupet, and Martin Ziegler for helpful discussions, and to an anonymous referee for his or her helpful comments on a previous version of this paper.

\bibliographystyle{eptcs}
\bibliography{../../../spieltheorie}

\end{document}